\documentclass[a4paper,UKenglish,cleveref, autoref, thm-restate]{lipics-v2021}
\newif\iflong\longtrue

\iflong
\hideLIPIcs  
\fi


\bibliographystyle{plainurl}

\title{Worst-Case and Smoothed Analysis of the Hartigan--Wong Method for k-Means Clustering} 

\titlerunning{Worst-Case and Smoothed Analysis of the Hartigan--Wong Method} 

\author{Bodo Manthey}{Faculty of Electrical Engineering, Mathematics, and Computer Science, University of Twente, The Netherlands \and \url{https://people.utwente.nl/b.manthey}}{b.manthey@utwente.nl}{https://orcid.org/0000-0001-6278-5059}{}
\author{Jesse van Rhijn}{Faculty of Electrical Engineering, Mathematics, and Computer Science, University of Twente, The Netherlands \and \url{https://people.utwente.nl/j.vanrhijn}}{j.vanrhijn@utwente.nl}{https://orcid.org/0000-0002-3416-7672}{Supported by NWO grant OCENW.KLEIN.176.}

\authorrunning{B. Manthey and J. van Rhijn}

\Copyright{Bodo Manthey and Jesse van Rhijn}

\ccsdesc[100]{Theory of computation~Randomness, geometry and discrete structures}
\ccsdesc[100]{Theory of computation~Approximation algorithms analysis}
\ccsdesc[100]{Theory of computation~Discrete optimization}

\keywords{k-means clustering, smoothed analysis, probabilistic analysis, local search, heuristics} 

\category{} 

\iflong
\else
\relatedversiondetails{Full version}{https://arxiv.org/abs/2309.10368} 
\fi


\nolinenumbers 

\EventEditors{Olaf Beyersdorff, Mamadou Moustapha Kant\'{e}, Orna Kupferman, and Daniel Lokshtanov}
\EventNoEds{4}
\EventLongTitle{41st International Symposium on Theoretical Aspects of Computer Science (STACS 2024)}
\EventShortTitle{STACS 2024}
\EventAcronym{STACS}
\EventYear{2024}
\EventDate{March 12--14, 2024}
\EventLocation{Clermont-Ferrand, France}
\EventLogo{}
\SeriesVolume{289}
\ArticleNo{8}


\let\given\givenbase

\newcommand{\poly}{\operatorname{poly}}

\newcommand{\cm}{\operatorname{cm}}

\def\prob{\ensuremath\mathbb{P}}
\def\expect{\ensuremath\mathbb{E}}
\newcommand*\dd{\mathop{}\!\mathrm{d}}
\def\C{\ensuremath\mathcal{C}}
\def\X{\ensuremath\mathcal{X}}
\def\Y{\ensuremath\mathcal{Y}}

\let\originalleft\left
\let\originalright\right
\renewcommand{\left}{\mathopen{}\mathclose\bgroup\originalleft}
\renewcommand{\right}{\aftergroup\egroup\originalright}

\crefname{ineq}{inequality}{inequalities}
\creflabelformat{ineq}{#2{\upshape(#1)}#3} 

\usepackage{algorithm2e}
\usepackage{thmtools}

\usepackage{regexpatch}
\makeatletter
\xpatchcmd\thmt@restatable{%
\csname #2\@xa\endcsname\ifx\@nx#1\@nx\else[{#1}]\fi
}{%
\ifthmt@thisistheone
\csname #2\@xa\endcsname\ifx\@nx#1\@nx\else[{#1}]\fi
\else
\csname #2\@xa\endcsname[{Restated}]
\fi}{}{}
\makeatother

\usepackage{tikz}
\usepackage{tkz-graph}
\usetikzlibrary{calc}
\usetikzlibrary{fit, shapes.geometric}
\tikzstyle{vertex}=[circle, fill, inner sep=0pt, minimum size=0.1pt]

\tikzset{snake it/.style={decorate, decoration=snake}}
\tikzset{
dot/.style = {circle, fill, minimum size=#1,
              inner sep=0pt, outer sep=0pt},
dot/.default = 3pt 
}

\begin{document}

\maketitle

\begin{abstract}
    We analyze the running time of the Hartigan--Wong method, an old algorithm for the~$k$-means
    clustering problem. First, we construct an instance on the line on which
    the method can take~$2^{\Omega(n)}$ steps to converge, demonstrating that the Hartigan--Wong method
    has exponential worst-case running time even when~$k$-means
    is easy to solve.
    As this is in contrast to the empirical performance
    of the algorithm, we also analyze the running time in the
    framework of smoothed analysis. In particular,
    given an instance of~$n$ points in~$d$ dimensions, we prove that
    the expected number of iterations needed for the Hartigan--Wong method
    to terminate is bounded by~$k^{12kd}\cdot \poly(n, k, d, 1/\sigma)$
    when the points in the instance are perturbed by independent 
   ~$d$-dimensional Gaussian random variables of mean~$0$ and standard
    deviation~$\sigma$.
\end{abstract}

\section{Introduction}

Clustering is an important problem in computer science, from both
a practical and a theoretical perspective. On the practical side,
identifying clusters of similar points in large data sets has relevance
to fields ranging from physics to biology to sociology. Recent advances
in machine learning and big data have made the need for efficient clustering
algorithms even more apparent. On the theoretical side, clustering problems
continue to be a topic of research from the perspective of approximation algorithms,
heuristics, and computational geometry.

Perhaps the best-studied clustering problem is that of~$k$-means clustering.
In this problem, one is given a finite set of points~$\X \subseteq \mathbb{R}^d$
and an integer~$k$. The goal is to partition the points into~$k$ subsets, such that
the sum of squared distances of each point to the centroid of its assigned cluster,
also called its cluster center, is minimized.

Despite great effort to devise approximation algorithms for~$k$-means clustering,
the method of choice remains Lloyd's method \cite{lloydLeastSquaresQuantization1982}. This method
starts with an arbitrary choice of centers, and assigns each point to its closest
center. The centers are then moved to the centroids of each cluster. In the next
iteration, each point is again reassigned to its closest center, and the process repeats.

It is not hard to show that this process strictly decreases the objective function
whenever either a cluster center changes position, or a point is reassigned. Hence,
no clustering can show up twice during an execution of this algorithm. Since
the number of partitions of~$n$ points into~$k$ sets is at most~$k^n$, the process
must eventually terminate.

Although Lloyd's method has poor approximation performance both in theory
and in practice \cite{arthurKmeansAdvantagesCareful2007}, its speed has kept it relevant to
practitioners. This is in startling contrast to its
worst-case running time, which is exponential in the number of points
\cite{vattaniKmeansRequiresExponentially2011}.

To close the gap between theory and practice, Arthur et al.\ have shown that
Lloyd's method terminates in expected polynomial time on
perturbed point sets, by means of a smoothed analysis
\cite{arthurSmoothedAnalysisKMeans2011}. This
provides some theoretical justification for the use of Lloyd's method
in practice.

Another, less well-known heuristic for clustering is the Hartigan--Wong method 
\cite{hartiganAlgorithm136KMeans1979}.
In this method, one proceeds point-by-point. Given an arbitrary clustering,
one checks whether there exists a point that can be reassigned to a different cluster,
such that the objective function decreases.
If such a point exists, it is reassigned to this new cluster. If no such
points exist, the algorithm terminates and the clustering is declared locally optimal.

Although at first sight the Hartigan--Wong method might seem like
a simpler version of Lloyd's method, it is qualitatively different.
If Lloyd's method reassigns a point~$x$ from cluster~$i$ to cluster~$j$,
then~$x$ must be closer to the center of cluster~$j$ than to that of
cluster~$i$. In the Hartigan--Wong method, this is not true;~$x$ may be reassigned
even when there are no cluster centers closer to~$x$ than its current center.
This can be beneficial, as Telgarsky \& Vattani showed that the Hartigan--Wong method
is more powerful than Lloyd's method \cite{telgarskyHartiganMethodKmeans2010}.

To be precise, every local
optimum of the Hartigan--Wong method is also a local optimum of LLoyd's method,
while the converse
does not hold. Telgarsky \& Vattani
moreover performed computational experiments, which show that
the Hartigan--Wong method not only tends to find better clusterings than Lloyd's, but
also has a similar running time on practical instances. 
Despite these promising results, theoretical knowledge of the Hartigan--Wong method
is lacking. 

In this paper, we aim to advance our understanding of this heuristic.
Our contributions are twofold. First, we construct an instance on the line
on which the Hartigan--Wong method
can take~$2^{\Omega(n)}$ iterations to terminate. Considering that~$k$-means clustering can be solved exactly in polynomial time
in~$d = 1$, this shows that the worst-case running time of the Hartigan--Wong method
is very poor even on easy instances.
This is in contrast to Lloyd's method,
where all known non-trivial lower bounds require~$d \geq 2$.

\begin{theorem}[restate=lowerbound]\label{thm:lower_bound}
    For each~$m \in \mathbb{N}_{\geq 2}$ there exists an instance of
   ~$k$-means clustering on the line with~$n = 4m-3$ points and~$k = 2m - 1$ clusters
    on which the Hartigan--Wong method can
    take~$2^{\Omega(n)}$ iterations to converge to a local optimum.
\end{theorem}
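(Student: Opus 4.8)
The plan is to realise a binary counter inside the clustering, so that counting from $0$ up to $2^{m-1}-1$ corresponds to an execution of the Hartigan--Wong method of the same length. Concretely, I would build the instance from $m-1$ gadgets $G_1,\dots,G_{m-1}$ placed along the line, plus one ``anchor'' point sitting in its own cluster far from everything; choosing each gadget to consist of four points and two clusters makes the totals $4(m-1)+1=4m-3$ points and $2(m-1)+1=2m-1$ clusters, matching the statement. Each $G_j$ has two stable configurations of its four points among its two clusters, which we read off as a bit $b_j\in\{0,1\}$, so that a clustering in which every gadget is stable encodes the integer $c=\sum_j b_j2^{j-1}\in\{0,\dots,2^{m-1}-1\}$; the initial clustering encodes $c=0$. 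The gadgets are laid out at geometrically shrinking scales and at mutual distances large compared with those scales, so that non-adjacent gadgets are essentially inert toward one another while adjacent gadgets are coupled in a controlled way, which is what propagates the carry of the counter.

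The heart of the proof is a single lemma: if the current clustering encodes $c$ and $j$ is the position of its lowest zero bit, then there is a sequence of $O(j)$ strictly improving Hartigan--Wong reassignments that flips $G_j$ from state $0$ to state $1$ and then flips $G_{j-1},G_{j-2},\dots,G_1$ back from state $1$ to state $0$, ending exactly at the clustering encoding $c+1$, and that never asks a point to return to a cluster it previously left while the surrounding configuration is unchanged --- which is impossible, because the $k$-means objective strictly decreases along any Hartigan--Wong execution. Granting this lemma, one chains it over $c=0,1,\dots,2^{m-1}-2$; each increment costs at least one reassignment, so the whole execution has length at least $2^{m-1}-1$, which is $2^{\Omega(n)}$ since $n=4m-3$, and one arranges the top gadget $G_{m-1}$ to be one-directional so that the clustering encoding $2^{m-1}-1$ is a local optimum and the run halts there. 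Proving the lemma is a matter of fixing the gadget coordinates as explicit functions of $m$ and the scale ratio and then verifying, for the handful of reassignment types that arise, the Hartigan--Wong acceptance rule in its usual form --- a point $x$ is moved from its cluster $C$ to another cluster $C'$ exactly when $\frac{|C'|}{|C'|+1}\,(x-\mu_{C'})^2 < \frac{|C|}{|C|-1}\,(x-\mu_{C})^2$, with $\mu_C,\mu_{C'}$ the current centroids --- so that every intended move satisfies it strictly, every move that would corrupt the counter dynamics fails it strictly, and the spurious cross-gadget interactions are swamped by the slack.

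I expect the real obstacle to lie entirely in the gadget design, specifically in reconciling two opposing demands through the Hartigan--Wong rule: flipping $G_j$ up must be \emph{gated} by the states of the lower gadgets $G_1,\dots,G_{j-1}$, so that the counter does not skip values, yet flipping $G_j$ up must also \emph{re-destabilise} $G_{j-1}$, so that the carry cascades downward, and both couplings have to survive inside an instance where far-apart gadgets are deliberately made non-interacting. Making these coexist with the strict monotonicity of the objective essentially forces each gadget to carry slightly more internal state than a plain two-state toggle and forces a rather rigid nested geometric layout; once the positions and cluster sizes are pinned down, what remains is a long but finite and routine list of inequalities of the form above, which is where most of the technical work sits. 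The one-dimensional setting does not make this easier --- it is precisely the factors $\frac{|C|}{|C|-1}$ and $\frac{|C'|}{|C'|+1}$ in the Hartigan--Wong rule, absent from Lloyd's method, that allow a point to migrate toward a \emph{farther} centroid and thereby make such behaviour possible on the line at all.
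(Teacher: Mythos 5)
Your high-level strategy is exactly the paper's: a binary counter realised by $m-1$ four-point, two-cluster gadgets at geometrically separated scales, plus one extra isolated point in its own cluster, giving $4(m-1)+1=4m-3$ points and $2(m-1)+1=2m-1$ clusters; an increment of the counter corresponds to a short improving Hartigan--Wong sequence, and chaining $2^{m-1}-1$ increments gives the exponential lower bound. You also correctly identify the two load-bearing observations: that the asymmetric factors $\frac{|C|}{|C|-1}$ and $\frac{|C'|}{|C'|+1}$ are what permit a point to migrate toward a \emph{farther} centroid (without which nothing interesting happens on the line), and that a plain two-state toggle per gadget cannot simultaneously gate the flip and re-destabilise the lower gadget, so each gadget must carry additional internal state.

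The gap is that the proposal stops precisely where the paper's technical content begins. Your ``single lemma'' --- that for suitable gadget coordinates the increment sequence is strictly improving and nothing else fires --- is asserted, not established, and it is not routine: the paper's own gadget was found by encoding the required strict inequalities $\Delta_x(S,T)>0$ as an integer program and solving it with Gurobi, and the resulting gadget has \emph{three} distinguished configurations (morning, afternoon, asleep), not two. Crucially, the extra state is implemented by having a designated point $p_i$ physically leave $G_i$'s clusters and join a cluster of $G_{i-1}$ while $G_i$ is asleep; this cross-gadget membership is the entire carry mechanism, and it is not something one recovers from ``two stable configurations of its four points among its two clusters'' together with soft coupling. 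Likewise, the isolated extra point is not an inert anchor ``far from everything'' but the leaf gadget $G_0$ that $p_1$ actually joins and leaves, so its cluster has variable size $1$ or $2$. So while the counter idea, the parameter count, and the diagnosis of the obstacles are all on target, the proposal does not yet contain a construction whose inequalities could be checked, and producing one is where essentially all of the work in the paper's proof sits (the appendix verifies ten inequalities for explicit coordinates found by the solver).
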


Second, we attempt to reconcile \Cref{thm:lower_bound} with
the observed practical performance of the Hartigan--Wong method.
We perform a smoothed analysis of its running time, in which each point
in an arbitrary instance is independently perturbed by a Gaussian random
variable of variance~$\sigma^2$. 

\begin{theorem}[restate=smoothedcomplexity]\label{thm:smoothed_complexity}
    Let~$n, k, d \in \mathbb{N}$, and assume~$4kd \leq n$.
    Fix a set of~$n$ points~$\Y \subseteq [0, 1]^d$, and assume that each point
    in~$\Y$ is independently perturbed by a~$d$-dimensional Gaussian random
    variable with mean~$0$ and standard deviation~$\sigma$, yielding a new
    set of points~$\X$. Then the expected running time of the Hartigan--Wong method
    on~$\X$ is bounded by
    \[
    O\left(
                    \frac{k^{12kd+5} d^{12} n^{12.5 +\frac{1}{d}}\ln^{4.5}(nkd)}
                    {\sigma^4}
    \right) = k^{12kd} \cdot \poly(n, k, d, 1/\sigma).
    \]
\end{theorem}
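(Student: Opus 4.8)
The plan is to bound the expected number of iterations by the standard potential argument: the $k$-means objective $\Phi$ strictly decreases in every iteration, is always nonnegative, and with overwhelming probability never exceeds some value $\Phi_{\max}$; hence the number of iterations is at most $\Phi_{\max}/\delta$, where $\delta$ is the smallest improvement made by any single iteration during the run. First I would dispatch the easy ingredients. Scaling is harmless, since both the objective and the trajectory of the Hartigan--Wong method are scale-equivariant, so I may assume $\sigma \le 1$. A Gaussian tail bound then shows that with probability at least $1 - (nk)^{-n}$ all perturbed points lie in a ball of radius $\Theta(\sigma\sqrt{dn\ln(nk)})$, on which $\Phi_{\max} = \poly(n,d,\ln(nk))$; on the complementary event the objective takes at most $k^n$ values and strictly decreases, so the method performs at most $k^n$ iterations, and this event contributes only $o(1)$ to the expectation because its probability is at most $(nk)^{-n}$. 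Thus the whole difficulty is to show that, with high probability, \emph{every} improving iteration decreases $\Phi$ by at least $\delta = \sigma^{O(1)}/\bigl(k^{\Theta(kd)}\poly(n,d)\bigr)$.

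For this I would use the Hartigan--Wong update formula: moving a point $x$ from cluster $C_i$ to cluster $C_j$ changes the objective by
\[
\Delta \;=\; \frac{|C_j|}{|C_j|+1}\,\bigl\|x - \cm(C_j)\bigr\|^2 \;-\; \frac{|C_i|-1}{|C_i|}\,\bigl\|x - \cm(C_i \setminus \{x\})\bigr\|^2 ,
\]
and the move is improving exactly when $\Delta < 0$. For a \emph{fixed} partition this is a quadratic polynomial in the perturbed points, and after conditioning on all points other than $x$ it is a quadratic in the single $d$-dimensional Gaussian $x$ whose coefficient on $\|x\|^2$ equals $\frac{|C_j|-|C_i|+1}{|C_i|(|C_j|+1)}$. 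When $|C_i| \neq |C_j|+1$ this coefficient has absolute value at least $1/n^2$, so $\Delta$ is a genuinely non-degenerate quadratic in $x$; when $|C_i| = |C_j|+1$ the quadratic part cancels and $\Delta$ is affine in $x$ with gradient proportional to $\cm(C_j)-\cm(C_i\setminus\{x\})$, which I would separately argue is bounded away from $0$ with high probability. In either case, Gaussian anti-concentration (a non-degenerate low-degree polynomial of a Gaussian has bounded density, the one-dimensional $\chi^2$ case giving the worst tail $O(\sqrt{\varepsilon})$) yields $\prob[\Delta \in (-\varepsilon,0)] \le \poly(n)\cdot\sqrt{\varepsilon}/\sigma$ for every fixed partition and every choice of $x$, $i$, $j$.

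The main obstacle is to turn this per-configuration estimate into a statement about the whole run without paying the prohibitive $k^n$ union bound over all partitions, and while correctly decoupling the trajectory of the algorithm from the randomness used for anti-concentration --- the partition reached by the Hartigan--Wong method is itself a function of all the points, including of $x$, so one cannot naively fix a partition and then ``reveal $x$ last''. To get around this I would exploit that $\Delta$ depends on the partition only through the moved point $x$, the identities and sizes of the two clusters involved, and the two centroids $\cm(C_j)$ and $\cm(C_i\setminus\{x\})$. Since $\Delta$ is $\poly(n,\sqrt{\ln(nk)})$-Lipschitz in these two centroids, one may replace them by points of a sufficiently fine net of the bounding ball, and --- following the block-iteration technique used in the smoothed analysis of Lloyd's method --- look at a window of $\Theta(kd)$ consecutive iterations, within which all $k$ cluster centroids can be pinned down from $O(kd)$ of the data points, the remaining points then being conditioned on. This reduces the relevant union bound to size $k^{\Theta(kd)}\cdot\poly(n,d,1/\sigma)$; the fact that the exponential dependence falls on $k$ and $d$ rather than on $n$ is exactly what produces the $k^{\Theta(kd)}$ factor and blocks a fully polynomial bound. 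Optimizing $\varepsilon$ against this union bound gives the claimed $\delta$, and substituting into $\Phi_{\max}/\delta$, together with the $\Theta(kd)$ iterations per window, yields $k^{12kd}\cdot\poly(n,k,d,1/\sigma)$. I expect this decoupling-and-netting step --- making the union bound small enough while keeping the conditioning legitimate --- to be the technical heart of the argument.
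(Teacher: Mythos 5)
Your high-level plan — bound $\Phi$ with a Gaussian tail, show via anti-concentration plus a net over approximate centroids that no long enough stretch of iterations decreases $\Phi$ by less than some $\delta$, and divide — is the paper's plan, and your anti-concentration estimate and $\poly(n,\sqrt{\ln(nk)})$-Lipschitz observation match the paper's Lemmas on approximating the gain. But there are two concrete gaps.

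The first is the block size. You propose windows of ``$\Theta(kd)$ consecutive iterations'' and later charge ``$\Theta(kd)$ iterations per window.'' The paper instead partitions the run into blocks in which exactly $4kd$ \emph{distinct} points move, and such a block can have length up to $k^{4kd}$. This distinction matters: the union bound over the initial grid positions of the $k$ centroids costs $(2D/\epsilon)^{kd}$, and to compensate you need roughly $(\sqrt{\epsilon}/\sigma)^{|A|}$ from $|A|$ independent anti-concentration events, which requires $|A|=\Theta(kd)$ distinct points. A window of $\Theta(kd)$ iterations gives no such guarantee — it could move the same two points back and forth — so the per-window union bound would not close. The maximal block length $k^{4kd}$ is also a genuine multiplicative term in the final bound, not a $\poly(kd)$ factor, and your accounting misses it.

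The second gap is how the approximate centroids are carried through the block. It is not enough to net the centroids once: as points move, the centroids drift, and you must recertify that your netted points stay $O(|A|\epsilon)$-close. The paper's mechanism (Lemma~\ref{lemma:move_approximation}) is that if you additionally fix, for each active point, the configuration of the active set just before its first move (costing a $k^{|A|^2}$ union bound, another source of the $k^{O(kd)}$ factor), you can update the netted centroids \emph{deterministically} and track their error exactly. Your sketch — ``pin down the $k$ centroids from $O(kd)$ data points, condition on the rest'' — is not what happens: centroids depend on all points, and the paper never pins them down exactly; it maintains grid approximations and bounds their drift. Related to this, you flag a degeneracy when $|C_i| = |C_j|+1$ because you wrote $\Delta$ in terms of $\cm(C_i\setminus\{x\})$; the paper avoids this by using the equivalent form $\frac{|C_i|}{|C_i|-1}\|x-a\|^2 - \frac{|C_j|}{|C_j|+1}\|x-b\|^2$ with $a,b$ fixed grid points, whose quadratic coefficient is always $\geq 1/n$, so no separate case is needed. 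Finally, note the technique here is closer to Etscheid–R\"oglin's Max-Cut analysis than to the Lloyd's-method blocks you cite; the paper explicitly argues that the Lloyd's structural properties (bisecting-hyperplane separation, two-phase monotonicity) fail for Hartigan–Wong.
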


Although we do not attain a polynomial smoothed running time in all
problem parameters, we note that for Lloyd's method one of the first smoothed
analyses yielded a similar~$k^{kd}\poly(n, 1/\sigma)$ bound.
This was later improved to~$\poly(n, k, d, 1/\sigma)$.
We therefore regard \Cref{thm:smoothed_complexity} as a first step to 
settling the conjecture by
Telgarsky \& Vattani that the Hartigan--Wong method, like Lloyd's method, should have
polynomial smoothed running time.

We note that \Cref{thm:lower_bound} shows that there exists an instance on which 
there exists some very specific sequence of iterations that has exponential length.
In essence, this means that the exponential running time is only shown for
a very specific pivot rule for choosing which point to reassign to which cluster in
each iteration. By contrast, \Cref{thm:smoothed_complexity} holds for
\emph{any} pivot rule, not simply for any particular choice.

\section{Preliminaries and Notation}

Given vectors~$x, y \in \mathbb{R}^d$, we write~$\langle x, y \rangle$ for the standard Euclidean inner product on~$\mathbb{R}^d$, and~$\|x\| = \sqrt{\langle x, x \rangle}$ for the standard norm.

Given a set of~$k$ clusters~$\C = \{\C_1, \ldots, \C_k\}$, a configuration
of a cluster~$\C_i \in \C$ is an assignment of a set of points
to~$\C_i$. We will denote the clusters by calligraphic letters, and
their configurations by regular letters; i.e., the configuration
of~$\C_i$ will be denoted~$C_i$. This distinction is sometimes useful.
For the majority of this paper, however, we will not make this
distinction explicitly, and will refer to both a cluster and its configuration
interchangeably by regular letters.

Given a finite set of points~$S \subseteq \mathbb{R}^d$, we define the center
of mass of~$S$ as
\[
    \cm(S) = \frac{1}{|S|} \sum_{x \in S} x.
\]
With this definition, we can formally define the objective function of~$k$-means.
Let~$C = \{C_i\}_{i=1}^k$ be a partition of a finite set of points~$\X \subseteq \mathbb{R}^d$.
Then the objective function of~$k$-means is
\[
    \Phi(C) = \sum_{i=1}^k \sum_{x \in C_i} \|x - \cm(C_i)\|^2 = \sum_{i=1}^k \Phi(C_i),
\]
where we define~$\Phi(C_i) = \sum_{x \in C_i}\|x - \cm(C_i)\|^2$.
We will also refer to~$\Phi(C)$ as the potential function.

%

For both the worst-case and smoothed complexity bounds, we need
to analyze the improvement of a single iteration. Thus, we need a
simple expression for this quantity. \Cref{lemma:merge,lemma:move}
allow us to obtain such an expression. These results
were already obtained by Telgarsky \& Vattani \cite{telgarskyHartiganMethodKmeans2010}.

\begin{lemma}[Telgarsky \& Vattani \cite{telgarskyHartiganMethodKmeans2010}]\label{lemma:merge}
    Let~$S$ and~$T$ be two disjoint nonempty sets of points in~$\mathbb{R}^d$. Then
    \[ 
        \Phi(S \cup T) - \Phi(S) - \Phi(T) = \frac{|S|\cdot |T|}{|S| + |T|}
            \cdot \|\cm(S) - \cm(T)\|^2.
    \]
\end{lemma}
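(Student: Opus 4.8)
The plan is to prove this by a direct computation using the well-known "parallel axis" decomposition of the potential function. First I would recall that for any finite point set $U$ and any point $z \in \mathbb{R}^d$, we have the identity $\sum_{x \in U} \|x - z\|^2 = \Phi(U) + |U| \cdot \|\cm(U) - z\|^2$, which follows by expanding the square and using that $\sum_{x \in U}(x - \cm(U)) = 0$. This identity is the one workhorse; everything else is bookkeeping.

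Applying this with $U = S \cup T$ and $z = \cm(S)$, and then splitting the sum over $S \cup T$ into the sum over $S$ and the sum over $T$, I get
\[
    \Phi(S \cup T) + |S \cup T| \cdot \|\cm(S \cup T) - \cm(S)\|^2
        = \Phi(S) + 0 + \Phi(T) + |T| \cdot \|\cm(T) - \cm(S)\|^2,
\]
where the "$0$" comes from applying the identity to $U = S$ with $z = \cm(S)$. Rearranging gives $\Phi(S \cup T) - \Phi(S) - \Phi(T) = |T| \cdot \|\cm(S) - \cm(T)\|^2 - |S \cup T| \cdot \|\cm(S \cup T) - \cm(S)\|^2$. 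It then remains to simplify the right-hand side.

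For that, I would use the explicit formula $\cm(S \cup T) = \frac{|S| \cm(S) + |T| \cm(T)}{|S| + |T|}$, so that $\cm(S \cup T) - \cm(S) = \frac{|T|}{|S| + |T|} (\cm(T) - \cm(S))$. Substituting, with $|S \cup T| = |S| + |T|$ since $S$ and $T$ are disjoint, the right-hand side becomes
\[
    |T| \cdot \|\cm(S) - \cm(T)\|^2 - (|S| + |T|) \cdot \frac{|T|^2}{(|S| + |T|)^2} \|\cm(S) - \cm(T)\|^2
        = \left(|T| - \frac{|T|^2}{|S| + |T|}\right)\|\cm(S) - \cm(T)\|^2,
\]
and the scalar factor collapses to $\frac{|S| \cdot |T|}{|S| + |T|}$, which is exactly the claimed expression.

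There is no real obstacle here; the only thing to be careful about is the direction in which one applies the parallel axis identity (choosing the base point to be $\cm(S)$ rather than $\cm(S \cup T)$ is what makes the $\Phi(S)$ term vanish cleanly), and keeping track of which center of mass each norm refers to. Nonemptiness of $S$ and $T$ is needed only so that the centers of mass are well-defined and the denominator $|S| + |T|$ is nonzero. An alternative but essentially equivalent route is to expand everything in terms of $\sum_{x}\|x\|^2$ and the squared norms of the (weighted) centroids directly; I would prefer the parallel-axis route since it is shorter and makes the cancellation transparent.
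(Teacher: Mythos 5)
Your proof is correct. The paper does not actually prove this lemma---it only cites Telgarsky \& Vattani \cite{telgarskyHartiganMethodKmeans2010}---so there is no in-paper argument to compare against, but your parallel-axis argument is the standard and clean route. Every step checks out: the choice of base point $z = \cm(S)$ makes the $S$-contribution reduce to $\Phi(S)$ with no correction term; the $T$-contribution contributes $\Phi(T) + |T|\,\|\cm(T)-\cm(S)\|^2$; disjointness gives $|S\cup T| = |S|+|T|$; and the identity $\cm(S\cup T)-\cm(S) = \frac{|T|}{|S|+|T|}\bigl(\cm(T)-\cm(S)\bigr)$ collapses the scalar factor to $|T| - \frac{|T|^2}{|S|+|T|} = \frac{|S|\cdot|T|}{|S|+|T|}$ as claimed.
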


\begin{lemma}[Telgarsky \& Vattani \cite{telgarskyHartiganMethodKmeans2010}]\label{lemma:move}
    Let~$S$ and~$T$ be two disjoint nonempty sets of points in~$\mathbb{R}^d$
    with~$|S| > 1$. Suppose we
    move a point~$x \in S$ from~$S$ to~$T$. Then
    \[
        \Phi(S\setminus\{x\}) + \Phi(T \cup \{x\}) - \Phi(T) - \Phi(S)
            = \frac{|T|}{|T| + 1} \|\cm(T) - x\|^2 - \frac{|S|}{|S| - 1}
                \|\cm(S) - x\|^2.
    \]
\end{lemma}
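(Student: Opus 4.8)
The plan is to derive the claim directly from \Cref{lemma:merge}, by writing both $S$ and $T \cup \{x\}$ as disjoint unions that involve the singleton $\{x\}$. The only facts about a singleton I need are $\Phi(\{x\}) = \|x - x\|^2 = 0$ and $\cm(\{x\}) = x$. First I would apply \Cref{lemma:merge} to the disjoint sets $T$ and $\{x\}$, which gives
\[
    \Phi(T \cup \{x\}) - \Phi(T) = \frac{|T|}{|T| + 1}\, \|\cm(T) - x\|^2 ,
\]
so the first term on the right-hand side of the claim appears immediately. Next I would apply \Cref{lemma:merge} to the disjoint sets $S \setminus \{x\}$ and $\{x\}$; this application is legitimate precisely because the hypothesis $|S| > 1$ guarantees that $S \setminus \{x\}$ is nonempty. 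This yields
\[
    \Phi(S) - \Phi(S \setminus \{x\}) = \frac{|S| - 1}{|S|}\, \|\cm(S \setminus \{x\}) - x\|^2 .
\]
Subtracting the second identity from the first produces exactly the left-hand side of the claim, but with the second term expressed through $\cm(S \setminus \{x\})$ rather than $\cm(S)$.

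The one step that is not pure bookkeeping is reconciling these two forms of the center-of-mass term, and this is where I expect the (very mild) crux to be. The key is the centroid update identity: since $\cm(S) = \frac{1}{|S|}\left((|S| - 1)\,\cm(S \setminus \{x\}) + x\right)$, subtracting $x$ from both sides gives $\cm(S) - x = \frac{|S| - 1}{|S|}\left(\cm(S \setminus \{x\}) - x\right)$, and therefore $\|\cm(S) - x\|^2 = \left(\frac{|S| - 1}{|S|}\right)^2 \|\cm(S \setminus \{x\}) - x\|^2$. Rearranging this shows
\[
    \frac{|S| - 1}{|S|}\, \|\cm(S \setminus \{x\}) - x\|^2 = \frac{|S|}{|S| - 1}\, \|\cm(S) - x\|^2 ,
\]
which is well defined again because $|S| > 1$. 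Substituting this into the difference of the two applications of \Cref{lemma:merge} yields the stated formula.

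In summary, the proof is: two invocations of \Cref{lemma:merge} on singleton splits, followed by the centroid update identity to convert $\cm(S \setminus \{x\})$ into $\cm(S)$. I do not anticipate any real obstacle; the only things to keep an eye on are that $|S| > 1$ is used twice (once to split off $\{x\}$ from $S$, once to keep $\frac{|S|}{|S|-1}$ meaningful), and that the bookkeeping of signs in $\Phi(S\setminus\{x\}) + \Phi(T \cup \{x\}) - \Phi(T) - \Phi(S) = \left(\Phi(T \cup \{x\}) - \Phi(T)\right) - \left(\Phi(S) - \Phi(S \setminus \{x\})\right)$ is done correctly.
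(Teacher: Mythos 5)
Your proof is correct. The paper states this lemma without giving a proof (it is cited from Telgarsky \& Vattani), so there is no in-paper argument to compare against; your derivation, two applications of \Cref{lemma:merge} at the singleton $\{x\}$ followed by the centroid update identity $\cm(S) - x = \frac{|S|-1}{|S|}\left(\cm(S\setminus\{x\}) - x\right)$ to trade $\cm(S\setminus\{x\})$ for $\cm(S)$, is the standard and natural one, and you correctly isolated the one genuinely nontrivial step: the asymmetry whereby the $T$-term refers to $\cm(T)$ (before gaining $x$) while the $S$-term must refer to $\cm(S)$ (before losing $x$), which is exactly what turns the factor $\frac{|S|-1}{|S|}$ into $\frac{|S|}{|S|-1}$.
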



Let~$C$ be some clustering of~$\X$. Suppose in some iteration of the Hartigan--Wong method,
we move~$x \in C_i$
to~$C_j$. Let the gain of this iteration be denoted~$\Delta_x(C_i, C_j)$.
Then \Cref{lemma:move} tells us that
\[
    \Delta_x(C_i, C_j) = \frac{|C_i|}{|C_i|-1} \|x - \cm(C_i)\|^2 - \frac{|C_j|}{|C_j| + 1} \|x - \cm(C_j)\|^2.
\]

At first sight, it seems like \Cref{lemma:move} leaves open the possibility
that a cluster is left empty. The following lemma shows that this can never happen.

\begin{lemma}\label{lemma:never_empty}
    No iteration can leave a cluster empty. 
\end{lemma}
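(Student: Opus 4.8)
The plan is to argue by contradiction. Suppose some iteration moves a point $x$ from $C_i$ to $C_j$ and leaves $C_i$ empty. Then $C_i$ must have consisted of the single point $x$ just before the move, so $|C_i| = 1$ and $\cm(C_i) = x$, whence $\Phi(C_i) = 0$. I would also record at the outset the trivial facts $\Phi(\emptyset) = 0$ and $\Phi(\{x\}) = 0$ for any single point $x$.

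Next I would compute the change in potential caused by this hypothetical move directly, rather than via \Cref{lemma:move}, which is not applicable here since it requires the source set to have more than one element (note the $|S|-1$ in its denominator). The contribution of the source cluster to the change is $\Phi(\emptyset) - \Phi(C_i) = 0$. The contribution of the target cluster is $\Phi(C_j \cup \{x\}) - \Phi(C_j)$, which by \Cref{lemma:merge} applied with $S = C_j$ and $T = \{x\}$, together with $\Phi(\{x\}) = 0$, equals $\frac{|C_j|}{|C_j|+1}\,\|\cm(C_j) - x\|^2 \geq 0$. Hence the total change in $\Phi$ induced by the move is non-negative.

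Finally, since the Hartigan--Wong method reassigns a point only when doing so strictly decreases the potential (equivalently, when the gain $\Delta_x(C_i, C_j)$ is strictly positive), a move that does not strictly decrease $\Phi$ is never performed. This contradicts the assumption that such a move occurred, so no iteration can leave a cluster empty.

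I do not anticipate a genuine obstacle here; the only point requiring care is that the closed-form expression for $\Delta_x(C_i, C_j)$ derived from \Cref{lemma:move} is undefined when $|C_i| = 1$, so the argument must instead proceed from \Cref{lemma:merge} and the observation that a singleton cluster has zero potential.
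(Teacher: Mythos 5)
Your proof is correct and follows essentially the same route as the paper's: observe that $C_i$ must have been a singleton $\{x\}$, note $\Phi(\{x\}) = \Phi(\emptyset) = 0$, apply \Cref{lemma:merge} with $S = C_j$, $T = \{x\}$ to see that the move cannot strictly decrease $\Phi$, and conclude. Your explicit remark that \Cref{lemma:move} is inapplicable when $|S|=1$ is a useful clarification but does not change the argument.
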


\begin{proof}
    Suppose before an iteration,~$C_i = \{x\}$ for some~$x \in X$,
    and after the iteration~$C_i' = \emptyset$ and~$C_j' = C_j \cup \{x\}$,
    i.e.~$x$ is moved from cluster~$i$ to cluster~$j$. The gain of this
    iteration is then (\Cref{lemma:merge})
    \[
        \Phi(C_i) + \Phi(C_j) - \Phi(\emptyset) - \Phi(C_j \cup \{x\})
            = \Phi(C_j) - \Phi(C_j \cup \{x\}) = - \frac{|C_j|}{|C_j| + 1} \|x - \cm(C_j)\|^2 \leq 0,
    \]
    since~$\cm(C_i) = x$ and~$\Phi(\emptyset) = 0$. Since every iteration must
    improve the clustering, this concludes the proof.
\end{proof}

\section{Exponential Lower Bound}

In this section, we construct a family of~$k$-means instances on the line
on which the Hartigan--Wong method can take an exponential number of iterations before
reaching a local optimum. To be precise, we prove the following theorem.

\lowerbound*

The construction we employ is similar to the construction used by
Vattani for Lloyd's method \cite{vattaniKmeansRequiresExponentially2011}. However,
the Hartigan--Wong method only reassigns a single point in each iteration,
and we are free to choose which point we reassign. Moreover, we are
even free to choose which cluster we move a point to if there are
multiple options. This allows us to simplify the construction
and embed it in a single dimension, rather than the plane used
by Vattani.

We define a set of~$m$ gadgets~$G_i$,~$i \in \{0, \ldots, m-1\}$. Each gadget except for the ``leaf'' gadget~$G_0$ consists of four 
points, and has two clusters~$G_i(\C_0)$ and~$G_i(\C_1)$ associated with it.
Moreover, each gadget except~$G_0$ has three distinguished states, called
``morning'', ``afternoon'', and ``asleep''. The leaf gadget
only has two states, ``awake'' and ``asleep ''.

During the morning state, a gadget~$G_i$ watches~$G_{i-1}$.
If~$G_{i-1}$ falls asleep, then it is awoken by~$G_i$; this is achieved by moving
a point of~$G_i$ to one of the clusters of~$G_{i-1}$. This allows~$G_{i-1}$ to
perform a sequence of iterations, which ends with~$G_{i-1}$ back in its morning state.

Meanwhile,~$G_{i}$ performs a sequence of iterations that transition it to
its afternoon state. During the afternoon state, it once more watches~$G_{i-1}$.
When the latter falls asleep,~$G_i$ once again wakes~$G_{i-1}$, and transitions
itself to its asleep state.

The leaf gadget~$G_0$, as it does not watch any gadgets, only ever awakens and immediately falls
asleep again.

We end the sequence of iterations once gadget~$m-1$ falls asleep. Observe that
with this construction,~$G_i$ falls asleep twice as often as~$G_{i+1}$. With the condition that~$G_{m-1}$ falls asleep once,
we obtain a sequence of at least~$2^{m-1}$ iterations. With~$n = 4m-3$,
this yields \Cref{thm:lower_bound}.

For space reasons, we only describe the instance and the exponential-length
sequence here. The proof that this sequence is improving, which completes
the proof of \Cref{thm:lower_bound}, is deferred to the 
\iflong appendix.
\else
full version.
\fi

\subsection{Formal Construction}\label{sec:lower_bound_sequence}

We now give a detailed construction of a unit gadget,~$G$. All gadgets
except for~$G_0$ are scaled and translated versions of~$G$.
The unit gadget is a tuple~$G = (S, \C_0, \C_1)$, where~$S = \{a, b, p, q\} \subseteq \mathbb{R}$, and~$\C_0$ and~$\C_1$
are two clusters. The positions of the points in~$S$ are given in \Cref{table:unit_gadget_points}.
In addition, the gadget is depicted schematically
in \Cref{fig:morning_afternoon,fig:wakeup}.
Note that the relative positions of the points
in these figures do not correspond to \Cref{table:unit_gadget_points},
but are chosen for visual clarity.

\begin{table}[ht!]
\centering
\begin{tabular}{l|l|l|l|l|l|l|l}
    Point &~$a$ &~$b$ &~$p$ &~$q$ &~$f$ &~$t_0$ \\ \hline
    Position & 9 & 6 & 5 & 13 & 0 & 8
\end{tabular}
\caption{Positions of the points in~$S(G)$, the leaf point~$f$,
and the translation vector~$t_0$ between gadgets~$G_1$ and~$G_2$.\label{table:unit_gadget_points}}
\end{table}

We remark that the points in \Cref{table:unit_gadget_points} are not simply chosen by
trial-and-error. As will be explained shortly, we can obtain from our construction a series
of inequalities that must be satisfied by the points in~$S$. We then obtained
these points by solving the model
\begin{align*}
    \min &\quad a^2 + b^2 + p^2 + q^2 + f^2 + t_0^2 \\
    \text{s.t.} &\quad \text{each move decreases the clustering cost,} \\ 
                &\quad a, b, p, q, f, t_0 \in \mathbb{Z}
\end{align*}
using Gurobi \cite{gurobioptimizationllcGurobiOptimizerReference2023}.
The first constraint here amounts to satisfying a series
of inequalities of the form~$\Delta_{x}(A, B) > 0$ for~$x \in S(G)$ and~$A, B$ subsets of the points in a gadget and its neighboring
gadgets. For space reasons, we defer their derivation and
verification to the 
\iflong
appendix.
\else
full version.
\fi

The objective function here is purely chosen so that
Gurobi prefers to choose small integers in the solution.

To construct~$G_i$ from the unit gadget (for~$i \geq 1)$, we scale the unit gadget by a factor~$5^{i-1}$,
and translate it by~$t_i = \sum_{j=0}^{i-1} 5^j t_0$, where~$t_0 = 8$. Since
each gadget only ever exchanges points with its neighbors in the sequence we are about to
construct, it will suffice in proving \Cref{thm:lower_bound} to consider only
iterations involving~$G_i$,~$G_{i-1}$ and~$G_{i+1}$ for some fixed~$i > 2$.
For the leaf gadget, we simply have~$G_0 = (S_0, \C_0)$, where~$S_0 = \{f\} = \{0\}$.

Before we go on to construct an improving sequence of exponential length,
we define the earlier-mentioned states. For ease of notation, we will refer
to the points of~$G_i$ as~$a_i$,~$b_i$, and so on, and to the clusters
of~$G_i$ as~$\C_0(G_i)$ and~$\C_1(G_i)$. Then we say the state of~$G_{i > 0}$ is:

\begin{itemize}
    \item asleep, if~$C_0(G_i) = \{b_i\}$ and~$C_1(G_i) = \{a_i, q_i\}$ (in this
    state,~$p_i$ is in some cluster of~$G_{i-1}$);
    \item morning, if~$C_0(G_i) = \{p_i, q_i, b_i\}$ and~$C_1(G_i) = \{a_i\}$;
    \item afternoon, if~$C_0(G_i) = \{b_i\}$ and~$C_1(G_i) = \{p_i, q_i, a_i\}$.
\end{itemize}

For the leaf gadget, we say its state is:

\begin{itemize}
    \item asleep, if~$C_0(G_0) = \{f\}$;
    \item awake, otherwise.
\end{itemize}

We now explicitly determine a sequence of iterations of exponential length.
In the proof of \Cref{thm:lower_bound}, we show that
this sequence is improving. To analyze the sequence, we consider
the perspective of~$G_{i}$ as it wakes up~$G_{i-1}$ and falls asleep; and then
as it is awoken by~$G_{i+1}$. We first consider only the case that~$G_{i-1} \neq G_0$. See \Cref{fig:morning_afternoon} and
\Cref{fig:wakeup} for a schematic depiction of the sequence described below.

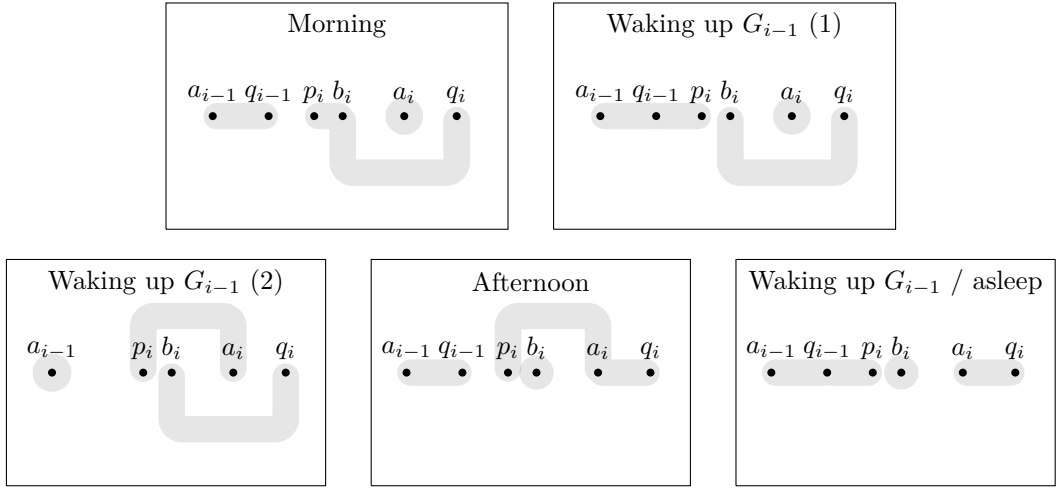
\begin{figure}
    \centering
    \begin{tikzpicture}[scale=1.5]
        \node[] at (1, 0.8) {Morning};
        \draw (-0.5, -1) -- (-0.5, 1) -- (2.5, 1) -- (2.5, -1) -- (-0.5, -1); 
    
        \node[dot, label=above:$a_i$] (a1) at (1.59, 0) {};
        \node[dot, label=above:$b_i$] (b1) at (1.05, 0) {};
        \node[dot, label=above:$p_i$] (p1) at (0.8, 0) {};
        \node[dot, label=above:$q_i$] (q1) at (2.05, 0) {};
        
        \node[dot, label=above:$a_{i-1}$] (a0) at (-0.09, 0) {};
        \node[dot, label=above:$q_{i-1}$] (q0) at (0.4, 0) {};

        \coordinate (pbelow) at (0.8, -0.5) {};
        \coordinate (bbelow) at (1.05, -0.5) {};
        \coordinate (qbelow) at (2.05, -0.5) {};

        \draw[line cap=round, rounded corners, line width=10pt, opacity=0.1] (p1) -- (b1) -- (bbelow) -- (qbelow) -- (q1);
        
        \node[ellipse, draw, fill, opacity=0.1, fit=(a1)] {};
        
        \draw[line cap=round, rounded corners, line width=10pt, opacity=0.1] (a0) -- (q0);
        
    \end{tikzpicture}
    \hspace{1em}
    \begin{tikzpicture}[scale=1.5]
        \node[] at (1, 0.8) {Waking up~$G_{i-1}$ (1)};
        \draw (-0.5, -1) -- (-0.5, 1) -- (2.5, 1) -- (2.5, -1) -- (-0.5, -1); 
        \node[dot, label=above:$a_i$] (a1) at (1.59, 0) {};
        \node[dot, label=above:$b_i$] (b1) at (1.05, 0) {};
        \node[dot, label=above:$p_i$] (p1) at (0.8, 0) {};
        \node[dot, label=above:$q_i$] (q1) at (2.05, 0) {};
        
        \node[dot, label=above:$a_{i-1}$] (a0) at (-0.09, 0) {};
        \node[dot, label=above:$q_{i-1}$] (q0) at (0.4, 0) {};

        \coordinate (pbelow) at (0.8, -0.5) {};
        \coordinate (bbelow) at (1.05, -0.5) {};
        \coordinate (qbelow) at (2.05, -0.5) {};

        \draw[line cap=round, rounded corners, line width=10pt, opacity=0.1] (b1) -- (bbelow) -- (qbelow) -- (q1);
        \node[ellipse, draw, fill, opacity=0.1, fit=(a1)] {};
        \draw[line cap=round, rounded corners, line width=10pt, opacity=0.1] (a0) -- (q0) -- (p1);
        
    \end{tikzpicture}
    \\ \vspace{1em}
    \begin{tikzpicture}[scale=1.5]
        \node[] at (1, 0.8) {Waking up~$G_{i-1}$ (2)};
        \draw (-0.4, -1) -- (-0.4, 1) -- (2.4, 1) -- (2.4, -1) -- (-0.4, -1); 
        \node[dot, label=above:$a_i$] (a1) at (1.59, 0) {};
        \node[dot, label=above:$b_i$] (b1) at (1.05, 0) {};
        \node[dot, label=above:$p_i$] (p1) at (0.8, 0) {};
        \node[dot, label=above:$q_i$] (q1) at (2.05, 0) {};
        
        \node[dot, label=above:$a_{i-1}$] (a0) at (0, 0) {};

        \coordinate (pbelow) at (0.8, -0.5) {};
        \coordinate (bbelow) at (1.05, -0.5) {};
        \coordinate (qbelow) at (2.05, -0.5) {};
        \coordinate (abelow) at (1.59, -0.5) {};
        
        \coordinate (pabove) at (0.8, 0.5) {};
        \coordinate (aabove) at (1.59, 0.5) {};

        \draw[line cap=round, rounded corners, line width=10pt, opacity=0.1] (b1) -- (bbelow) -- (qbelow) -- (q1);
        
        \draw[line cap=round, rounded corners, line width=10pt, opacity=0.1] (a1) -- (aabove) -- (pabove) -- (p1);
        
        \node[ellipse, draw, thick, fill, opacity=0.1, fit=(a0)] {};
        
    \end{tikzpicture}
    \hspace{1em}
    \begin{tikzpicture}[scale=1.5]
        \node[] at (1, 0.8) {Afternoon};
        \draw (-0.4, -1) -- (-0.4, 1) -- (2.4, 1) -- (2.4, -1) -- (-0.4, -1); 
        \node[dot, label=above:$a_i$] (a1) at (1.59, 0) {};
        \node[dot, label=above:$b_i$] (b1) at (1.05, 0) {};
        \node[dot, label=above:$p_i$] (p1) at (0.8, 0) {};
        \node[dot, label=above:$q_i$] (q1) at (2.05, 0) {};
        
        \node[dot, label=above:$a_{i-1}$] (a0) at (-0.09, 0) {};
        \node[dot, label=above:$q_{i-1}$] (q0) at (0.4, 0) {};

        \coordinate (pbelow) at (0.8, -0.5) {};
        \coordinate (bbelow) at (1.05, -0.5) {};
        \coordinate (qbelow) at (2.05, -0.5) {};
        \coordinate (abelow) at (1.59, -0.5) {};
        
        \coordinate (pabove) at (0.8, 0.5) {};
        \coordinate (aabove) at (1.59, 0.5) {};

        \node[ellipse, draw, fill, opacity=0.1, rotate fit = 45, fit=(b1)] {};
        
        \draw[line cap=round, rounded corners, line width=10pt, opacity=0.1] (q1) -- (a1) -- (aabove) -- (pabove) -- (p1);
        
        \draw[line cap=round, rounded corners, line width=10pt, opacity=0.1] (a0) -- (q0);
        
    \end{tikzpicture}
    \hspace{1em}
    \begin{tikzpicture}[scale=1.5]
        \node[] at (1, 0.8) {Waking up~$G_{i-1}$ / asleep};
        \draw (-0.4, -1) -- (-0.4, 1) -- (2.4, 1) -- (2.4, -1) -- (-0.4, -1); 
        \node[dot, label=above:$a_i$] (a1) at (1.59, 0) {};
        \node[dot, label=above:$b_i$] (b1) at (1.05, 0) {};
        \node[dot, label=above:$p_i$] (p1) at (0.8, 0) {};
        \node[dot, label=above:$q_i$] (q1) at (2.05, 0) {};
        
        \node[dot, label=above:$a_{i-1}$] (a0) at (-0.09, 0) {};
        \node[dot, label=above:$q_{i-1}$] (q0) at (0.4, 0) {};

        \coordinate (pbelow) at (0.8, -0.5) {};
        \coordinate (bbelow) at (1.05, -0.5) {};
        \coordinate (qbelow) at (2.05, -0.5) {};
        \coordinate (abelow) at (1.59, -0.5) {};
        
        \coordinate (pabove) at (0.8, 0.5) {};
        \coordinate (aabove) at (1.59, 0.5) {};

        \node[ellipse, draw, fill, opacity=0.1, rotate fit = 45, fit=(b1)] {};
        
        \draw[line cap=round, rounded corners, line width=10pt, opacity=0.1] (q1) -- (a1);
        
        \draw[line cap=round, rounded corners, line width=10pt, opacity=0.1] (a0) -- (q0) -- (p1);
        
    \end{tikzpicture}
    \caption{Schematic depiction of the interactions between~$G_i$ and~$G_{i-1}$ during
    the morning and afternoon phases of~$G_i$.}
    \label{fig:morning_afternoon}
\end{figure}

\paragraph*{Morning.} We start with~$G_i$ in the morning state, and~$G_{i-1}$ asleep.
To wake up~$G_{i-1}$, the point~$p_i$ moves to~$\C_1(G_{i-1})$, which currently
contains~$a_{i-1}$ and~$q_{i-1}$. This triggers the wakeup phase of~$G_{i-1}$;
we will analyze this phase later from the perspective of~$G_i$. When
the wakeup phase completes,~$\C_1(G_{i-1})$ contains~$a_{i-1}$ and~$p_i$, and~$p_i$ moves to~$\C_1(G_i)$.
Subsequently~$q_i$ moves from~$\C_0(G_i)$ to~$\C_1(G_i)$. Observe
that this puts~$G_i$ into the afternoon state.

\paragraph*{Afternoon.} In this state,~$G_i$ is once again watching~$G_{i-1}$. Once
the latter falls asleep,~$p_i$ moves from~$\C_1(G_i)$ to~$\C_1(G_{i-1})$, which
triggers another wakeup phase of~$G_{i-1}$. Additionally, this move causes~$G_i$
to fall asleep. Thus, at the end of the wakeup phase of~$G_{i-1}$, we have~$G_{i+1}$ wake up~$G_i$.

\paragraph*{Waking up.} First, the point~$p_{i+1}$ joins~$\C_1(G_i)$. Next,~$p_i$ moves from~$\C_1(G_{i-1})$ to~$\C_0(G_{i})$.  Then,~$q_i$ moves
from~$\C_1(G_i)$ to~$\C_0(G_i)$, and finally,~$p_{i+1}$ leaves~$\C_1(G_i)$, and joins either~$\C_1(G_{i+1})$ (if~$G_{i+1}$ was in
the morning state when waking up~$G_i$) or~$\C_0(G_{i+1})$
(if~$G_{i+1}$ was in the afternoon state; in this case, the move
of~$p_{i+1}$ occurs during the wakeup phase of~$G_{i+1}$).

\paragraph*{Leaf gadget.} The leaf gadget does not watch or wake up
any other gadgets. It only wakes up when~$p_1$ moves into~$\C_0(G_0)$, and falls asleep again when~$p_1$ moves back to a cluster
of~$G_1$.

\paragraph*{Initialization.} The sequence starts with all gadgets in the
asleep, except for~$G_{m-1}$, which is in its morning state.

\vspace{1em}

At every step,
we have the gadget with the smallest index that is not asleep wake up the gadget
that it is watching.
From this sequence of iterations, we can retrieve a series of inequalities,
each of which encodes the condition that the gain of every iteration must be positive.
To prove \Cref{thm:lower_bound}, we must show that the points
in \Cref{table:unit_gadget_points} satisfy these inequalities.

An implementation of the sequence described above is provided in the following link:
\url{https://pastebin.com/raw/McdArCWg}.

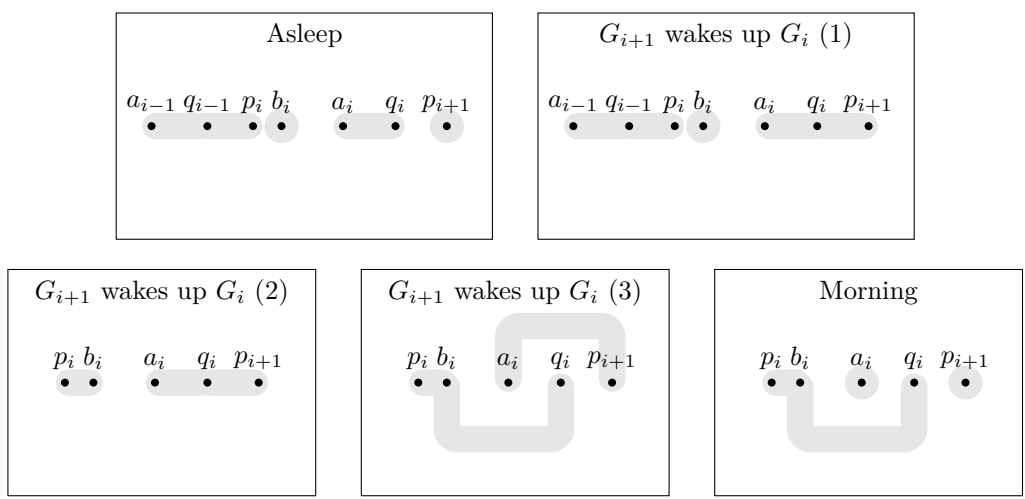
\begin{figure}
    \centering
    \begin{tikzpicture}[scale=1.5]
        \node[] at (1.25, 0.8) {Asleep};
        \draw (-0.4, -1) -- (-0.4, 1) -- (2.9, 1) -- (2.9, -1) -- (-0.4, -1); 
        \node[dot, label=above:$a_i$] (a1) at (1.59, 0) {};
        \node[dot, label=above:$b_i$] (b1) at (1.05, 0) {};
        \node[dot, label=above:$p_i$] (p1) at (0.8, 0) {};
        \node[dot, label=above:$q_i$] (q1) at (2.05, 0) {};
        
        \node[dot, label=above:$a_{i-1}$] (a0) at (-0.09, 0) {};
        \node[dot, label=above:$q_{i-1}$] (q0) at (0.4, 0) {};

        \node[dot, label=above:$p_{i+1}$] (p2) at (2.5, 0) {};

        \coordinate (pbelow) at (0.8, -0.5) {};
        \coordinate (bbelow) at (1.05, -0.5) {};
        \coordinate (qbelow) at (2.05, -0.5) {};
        \coordinate (abelow) at (1.59, -0.5) {};
        
        \coordinate (pabove) at (0.8, 0.5) {};
        \coordinate (aabove) at (1.59, 0.5) {};

        \node[ellipse, draw, fill, opacity=0.1, rotate fit = 45, fit=(b1)] {};
        
        \draw[line cap=round, rounded corners, line width=10pt, opacity=0.1] (q1) -- (a1);
        
        \draw[line cap=round, rounded corners, line width=10pt, opacity=0.1] (a0) -- (q0) -- (p1);
        
        \node[ellipse, draw, fill, opacity=0.1, rotate fit = 45, fit=(p2)] {};

    \end{tikzpicture}
    \hspace{1em}
    \begin{tikzpicture}[scale=1.5]
        \node[] at (1.25, 0.8) {$G_{i+1}$ wakes up~$G_i$ (1)};
        \draw (-0.4, -1) -- (-0.4, 1) -- (2.9, 1) -- (2.9, -1) -- (-0.4, -1); 
        \node[dot, label=above:$a_i$] (a1) at (1.59, 0) {};
        \node[dot, label=above:$b_i$] (b1) at (1.05, 0) {};
        \node[dot, label=above:$p_i$] (p1) at (0.8, 0) {};
        \node[dot, label=above:$q_i$] (q1) at (2.05, 0) {};
        
        \node[dot, label=above:$a_{i-1}$] (a0) at (-0.09, 0) {};
        \node[dot, label=above:$q_{i-1}$] (q0) at (0.4, 0) {};

        \node[dot, label=above:$p_{i+1}$] (p2) at (2.5, 0) {};

        \coordinate (pbelow) at (0.8, -0.5) {};
        \coordinate (bbelow) at (1.05, -0.5) {};
        \coordinate (qbelow) at (2.05, -0.5) {};
        \coordinate (abelow) at (1.59, -0.5) {};
        
        \coordinate (pabove) at (0.8, 0.5) {};
        \coordinate (aabove) at (1.59, 0.5) {};

        \node[ellipse, draw, fill, opacity=0.1, rotate fit = 45, fit=(b1)] {};
        
        \draw[line cap=round, rounded corners, line width=10pt, opacity=0.1] (p2) -- (q1) -- (a1);
        
        \draw[line cap=round, rounded corners, line width=10pt, opacity=0.1] (a0) -- (q0) -- (p1);
        
    \end{tikzpicture}
    \\\vspace{1em}
    \begin{tikzpicture}[scale=1.5]
        \node[] at (1.65, 0.8) {$G_{i+1}$ wakes up~$G_i$ (2)};
        \draw (0.3, -1) -- (0.3, 1) -- (3.0, 1) -- (3.0, -1) -- (0.3, -1); 
        \node[dot, label=above:$a_i$] (a1) at (1.59, 0) {};
        \node[dot, label=above:$b_i$] (b1) at (1.05, 0) {};
        \node[dot, label=above:$p_i$] (p1) at (0.8, 0) {};
        \node[dot, label=above:$q_i$] (q1) at (2.05, 0) {};
        
        \node[dot, label=above:$p_{i+1}$] (p2) at (2.5, 0) {};

        \coordinate (pbelow) at (0.8, -0.5) {};
        \coordinate (bbelow) at (1.05, -0.5) {};
        \coordinate (qbelow) at (2.05, -0.5) {};
        \coordinate (abelow) at (1.59, -0.5) {};
        
        \coordinate (pabove) at (0.8, 0.5) {};
        \coordinate (aabove) at (1.59, 0.5) {};

        \draw[line cap=round, rounded corners, line width=10pt, opacity=0.1] (p1) -- (b1);
        
        \draw[line cap=round, rounded corners, line width=10pt, opacity=0.1] (a1) -- (q1) -- (p2);
        
    \end{tikzpicture}
    \hspace{1em}
    \begin{tikzpicture}[scale=1.5]
        \node[] at (1.65, 0.8) {$G_{i+1}$ wakes up~$G_i$ (3)};
        \draw (0.3, -1) -- (0.3, 1) -- (3.0, 1) -- (3.0, -1) -- (0.3, -1); 
        \node[dot, label=above:$a_i$] (a1) at (1.59, 0) {};
        \node[dot, label=above:$b_i$] (b1) at (1.05, 0) {};
        \node[dot, label=above:$p_i$] (p1) at (0.8, 0) {};
        \node[dot, label=above:$q_i$] (q1) at (2.05, 0) {};
        
        \node[dot, label=above:$p_{i+1}$] (p2) at (2.5, 0) {};

        \coordinate (pbelow) at (0.8, -0.5) {};
        \coordinate (bbelow) at (1.05, -0.5) {};
        \coordinate (qbelow) at (2.05, -0.5) {};
        \coordinate (abelow) at (1.59, -0.5) {};
        
        \coordinate (pabove) at (0.8, 0.5) {};
        \coordinate (p2above) at (2.5, 0.5) {};
        \coordinate (aabove) at (1.59, 0.5) {};

        \draw[line cap=round, rounded corners, line width=10pt, opacity=0.1] (p1) -- (b1) -- (bbelow) -- (qbelow) -- (q1);
        
        \draw[line cap=round, rounded corners, line width=10pt, opacity=0.1] (a1) -- (aabove) -- (p2above) -- (p2);
        
    \end{tikzpicture}
    \hspace{1em}
    \begin{tikzpicture}[scale=1.5]
        \node[] at (1.65, 0.8) {Morning};
        \draw (0.3, -1) -- (0.3, 1) -- (3.0, 1) -- (3.0, -1) -- (0.3, -1); 
        \node[dot, label=above:$a_i$] (a1) at (1.59, 0) {};
        \node[dot, label=above:$b_i$] (b1) at (1.05, 0) {};
        \node[dot, label=above:$p_i$] (p1) at (0.8, 0) {};
        \node[dot, label=above:$q_i$] (q1) at (2.05, 0) {};
        
        \node[dot, label=above:$p_{i+1}$] (p2) at (2.5, 0) {};

        \coordinate (pbelow) at (0.8, -0.5) {};
        \coordinate (bbelow) at (1.05, -0.5) {};
        \coordinate (qbelow) at (2.05, -0.5) {};
        \coordinate (abelow) at (1.59, -0.5) {};
        
        \coordinate (pabove) at (0.8, 0.5) {};
        \coordinate (p2above) at (2.5, 0.5) {};
        \coordinate (aabove) at (1.59, 0.5) {};

        \draw[line cap=round, rounded corners, line width=10pt, opacity=0.1] (p1) -- (b1) -- (bbelow) -- (qbelow) -- (q1);
        
        \node[ellipse, draw, fill, opacity=0.1, rotate fit = 45, fit=(a1)] {};
        
        \node[ellipse, draw, fill, opacity=0.1, rotate fit = 45, fit=(p2)] {};
        
    \end{tikzpicture}
    \caption{Schematic depiction of the interactions between~$G_i$,~$G_{i-1}$  and~$G_{i+1}$ during the
    wakeup phase of~$G_i$. Note that the final state of~$G_i$ corresponds to the first
    state depicted in \Cref{fig:morning_afternoon}.}
    \label{fig:wakeup}
\end{figure}

\section{Smoothed Analysis}\label{sec:smoothed}

For a smoothed analysis, the first hope might be to straightforwardly adapt a smoothed analysis of
Lloyd's algorithm, e.g.\ that of Arthur, Manthey and R\"oglin \cite{arthurSmoothedAnalysisKMeans2011}. On closer inspection, however,
such analyses strongly rely on a couple of properties of Lloyd's method that are not
valid in the Hartigan--Wong method.

First, in Lloyd's algorithm the hyperplane that bisects two cluster centers
also separates their corresponding clusters, since every point is always assigned
to the cluster center closest to itself. Second, the two stages of Lloyd's algorithm,
moving the cluster centers and reassigning points, both decrease the potential.
Neither of these properties are satisfied by iterations of the Hartigan--Wong method. Hence, any
analysis that relies on either property cannot be easily repurposed. 

Instead, we will use a different technique, more closely related to the analysis
of the Flip heuristic for Max-Cut with squared Euclidean distances by
Etscheid and R\"oglin \cite{etscheidSmoothedAnalysisLocal2017}. The main result we will work
towards in this section is stated in \Cref{thm:smoothed_complexity}.

\subsection{Technical Preliminaries}

Let~$\Y \subseteq [0, 1]^d$ be a set of~$n$ points. Throughout the remainder,
we will denote by~$\X$ the set of points obtained by perturbing each point
in~$\Y$ independently by a~$d$-dimensional Gaussian vector of mean~$0$
and standard deviation~$\sigma \leq 1$. Note that this last assumption
is not actually a restriction. If~$\sigma > 1$, we scale down the
set~$\Y$ so that~$\Y \subseteq [0, 1/\sigma]^d$, and subsequently
perturb the points by Gaussian variables with~$\sigma = 1$. Since the
number of iterations required to terminate is invariant under scaling of the
input point set, this is equivalent to the original instance.

Our analysis is based on the standard technique of proving that it is unlikely
that a sequence of iterations decreases the potential function by a small amount. For
this technique to work, we additionally require the potential function to be bounded from above
and from below
with
sufficiently high probability. Since it is obvious that the potential is non-negative
for any clustering, it is enough to guarantee that the perturbed
point set~$\X$ lies within the hypercube~$[-D/2, D/2]^d$
for some finite~$D$.
To that end, we have the following lemma.

\begin{lemma}\label{lemma:points_bounded}
    Let~$D = \sqrt{2n \ln (nkd)}$.
    Then~$\prob(\X \nsubseteq [-D/2, D/2]^d) \leq k^{-n}$.
\end{lemma}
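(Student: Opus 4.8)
The plan is a routine concentration-plus-union-bound argument. The key observation is that a perturbed point can escape the hypercube $[-D/2,D/2]^d$ only if one of its Gaussian perturbation coordinates is abnormally large: if $x = y + g$ with $y \in [0,1]^d$ the base point and $g$ its $d$-dimensional Gaussian perturbation (each coordinate $\mathcal N(0,\sigma^2)$), then $x \notin [-D/2,D/2]^d$ forces $|g_j| > D/2 - 1$ for some coordinate $j$, since $|x_j| \le |y_j| + |g_j| \le 1 + |g_j|$ and $y_j \ge 0$ also handles the lower side. So it suffices to control the maximum, over all $nd$ coordinates of all $n$ perturbation vectors, of $|g_{ij}|$.

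Concretely, I would proceed as follows. First, for a single index pair $(i,j)$, bound the escape probability of that coordinate by the standard Gaussian tail estimate $\prob(|g_{ij}| \ge t) \le e^{-t^2/(2\sigma^2)} \le e^{-t^2/2}$, where the last inequality uses the standing assumption $\sigma \le 1$. Second, take a union bound over all $n$ points and all $d$ coordinates to obtain $\prob(\X \nsubseteq [-D/2,D/2]^d) \le nd\cdot e^{-(D/2-1)^2/2}$. Third, substitute $D = \sqrt{2n\ln(nkd)}$ and verify that the right-hand side is at most $k^{-n}$; this reduces to checking the inequality $(D/2-1)^2/2 \ge n\ln k + \ln(nd)$, which one does by noting that $(D/2)^2 = n\ln(nkd)/2$ dominates both terms on the right in the parameter range of interest, while the additive $-1$ shift contributed by the base points in $[0,1]^d$ is lower-order because $D/2$ grows with $n$.

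I expect no conceptual obstacle here; the only step requiring genuine care is the final bookkeeping in the third step — tracking the $-1$ shift, the $\ln(nd)$ union-bound overhead, and the constant hidden in $D$ — together with dispatching the trivial edge cases (for instance, ensuring $D/2 > 1$ so that the reduction above is meaningful). The reason a statement of precisely this strength is needed is structural: the overall smoothed analysis conditions on the good event $\{\X \subseteq [-D/2,D/2]^d\}$, on which the potential $\Phi$ is bounded above, and absorbs the complementary event into the trivial worst-case bound of $k^n$ on the number of iterations; for that contribution to be negligible, the failure probability must be driven down to $k^{-n}$, which is exactly what the choice of $D$ is engineered to achieve.
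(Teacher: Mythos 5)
Your overall plan---escape forces a large Gaussian coordinate, bound the tail of a single coordinate, and union bound over all $nd$ coordinates---is the standard one, and it is exactly what the paper has in mind when it omits the proof by pointing to prior work. Steps one and two are fine. The gap is in step three, which you describe as ``final bookkeeping'' with ``no conceptual obstacle''; that is precisely where the argument breaks. You need $(D/2-1)^2/2 \geq n\ln k + \ln(nd)$. With $D = \sqrt{2n\ln(nkd)}$ one has $(D/2)^2/2 = n\ln(nkd)/4$, so your claim that $(D/2)^2$ ``dominates both terms on the right'' reduces (already ignoring the $-1$ shift, which only hurts you) to $\ln(nkd) \geq 4\ln k$, i.e.\ $nd \geq k^3$. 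The theorem assumes only $4kd \leq n$, which with $d=1$ permits $k = n/4$; then $nd = n$ while $k^3 = n^3/64$, so $nd \geq k^3$ fails for every $n > 64$. Concretely, for $n = 1000$, $d = 1$, $k = 250$ you get $(D/2-1)^2/2 \approx 3029$ whereas $n\ln k + \ln(nd) \approx 5528$---the required inequality fails by nearly a factor of two in the exponent. You also cannot rescue this by tightening the union bound: in that regime the single escape probability $\prob(|g_{11}| > D/2)$ already exceeds $k^{-n}$, because $(D/2)^2/2 < n\ln k$. What your sketch actually yields is a bound of order $(nkd)^{-n/4}$, which is weaker than $k^{-n}$ whenever $k^3 > nd$. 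To close the argument one would need $D$ on the order of $\sqrt{n\ln k}$, for instance $D = 2 + 2\sqrt{2n\ln k + 2\ln(nd)}$, or an additional standing assumption like $k^3 \leq nd$. This is the one step that genuinely required scrutiny, and it also suggests that the constant in the paper's stated $D$ is too small for the lemma to hold as written under the hypothesis $4kd \leq n$ alone.
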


Similar results to \Cref{lemma:points_bounded} can be found in previous works on
the smoothed analysis of algorithms on Gaussian-perturbed point sets
\cite{mantheySmoothedAnalysis2Opt2013,arthurSmoothedAnalysisKMeans2011}. The only
difference in our version is the value of~$D$. Hence, we omit the proof.

\Cref{lemma:points_bounded} allows us to assume that all points lie within~$[-D/2, D/2]^d$ after the
perturbation. Formally, we must take into account the failure event that any point lies outside
this hypercube. However, since the probability of this event is at most~$k^{-n}$,
this adds only
a negligible~$+1$ to the smoothed complexity bound which we prove
in \Cref{thm:smoothed_complexity}. We therefore
ignore the failure event in the sequel.

We need to show that we can approximate the gain of an iteration if
we have a good approximation to the cluster centers. Recall
that~$\Delta_x(C_i, C_j)$ is the gain of moving a point~$x$
from~$C_i$ to~$C_j$. Since we wish to use approximations to the
centers of~$C_i$ and~$C_j$, it is convenient to define the variable
\[
    \Delta_x^{|C_i|,|C_j|}(a, b) = \frac{|C_i|}{|C_i|-1}\|x - a\|^2 - \frac{|C_j|}{|C_j|+1}
        \|x - b\|^2.
\]
This variable is the gain that would be incurred if the centers of~$C_i$ and~$C_j$, with fixed sizes~$|C_i|$ and~$|C_j|$, were~$a$ and~$b$. Indeed, note that~$\Delta_x^{|C_i|,|C_j|}(\cm(C_i), \cm(C_j)) = \Delta_x(C_i, C_j)$.
When their intended values are clear from context, we will often omit the
superscripts~$|C_i|$ and~$|C_j|$ from~$\Delta_x^{|C_i|,|C_j|}(a, b)$.

\subsection{Approximating Iterations}

Before we begin with the analysis, we provide a rough outline.
Suppose we tile the hypercube~$[-D/2, D/2]^d$ with a rectangular grid of spacing~$\epsilon$.
Then any point in~$[-D/2, D/2]^d$ is at a distance of at most~$\sqrt{d}\epsilon$ from
some grid point. 
Since we need the positions of the cluster
centers~$c_i = \cm(C_i)$ for~$i \in [k]$, we guess~$k$ grid points~$c'_i$ for their positions. If we guess correctly, meaning~$c'_i$ is the grid point closest to~$c_i$ for each~$i \in [k]$,
then we can approximate
the gain~$\Delta$ of an iteration by replacing the cluster centers with these grid points in the
formula for~$\Delta$ (\Cref{lemma:approximate_gain}).

The price for this approximation is a union bound over all choices of the grid points.
However, we can compensate for this by noticing that, when we move a point between clusters,
we know exactly how the cluster centers move. Thus, if the guessed grid points are good
approximations, we can obtain new good approximations by moving them the same amount.
Thus, we only need to guess once, and can use this guess for a sequence of iterations.
Then we can bound the probability that all iterations in this sequence
yield a small improvement.

\begin{lemma}\label{lemma:approximate_gain}
    Suppose the point~$x$ moves from cluster~$i$ to cluster~$j$. Let
   ~$C_i$ and~$C_j$ denote the configurations of these clusters before this move,
    and let~$c_i = \cm(C_i)$ and~$c_j = \cm(C_j)$. Let~$c_i'$
    and~$c_j'$ be two points such that
   ~$\|c_i - c_i'\|, \|c_j - c_j'\| \leq \epsilon$ for 
    some~$0 \leq \epsilon \leq \sqrt{d}D$. Then
    \[
        |\Delta_x(C_i, C_j) - \Delta_x(c_i', c_j')| \leq 9\sqrt{d}D \epsilon,
    \]
    In particular,
   ~$\Delta_x(C_i, C_j) \in (0, \epsilon]$ implies 
   ~$|\Delta_x(c_i', c_j')| \leq 10\sqrt{d}D\epsilon$.
\end{lemma}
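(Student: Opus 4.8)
The plan is to expand $\Delta_x(C_i, C_j) - \Delta_x(c_i', c_j')$ using the definition of $\Delta$ and bound each of the two difference terms separately. Writing $\Delta_x(C_i,C_j) = \frac{|C_i|}{|C_i|-1}\|x-c_i\|^2 - \frac{|C_j|}{|C_j|+1}\|x-c_j\|^2$ and similarly for $\Delta_x(c_i',c_j')$, the difference splits as
\[
    \frac{|C_i|}{|C_i|-1}\bigl(\|x-c_i\|^2 - \|x-c_i'\|^2\bigr) - \frac{|C_j|}{|C_j|+1}\bigl(\|x-c_j\|^2 - \|x-c_j'\|^2\bigr).
\]
For each term I would use the factorization $\|x-c\|^2 - \|x-c'\|^2 = \langle (x-c) - (x-c'), (x-c)+(x-c')\rangle = \langle c'-c, 2x - c - c'\rangle$, so that by Cauchy--Schwarz it is bounded in absolute value by $\|c-c'\| \cdot \|2x - c - c'\| \le \epsilon \cdot (\|x-c\| + \|x-c'\|)$.

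Next I would bound the geometric factors. Since all points of $\X$ lie in $[-D/2, D/2]^d$ (by \Cref{lemma:points_bounded}, which we are assuming), both $x$ and the true center $c_i = \cm(C_i)$ lie in this hypercube, so $\|x - c_i\| \le \sqrt{d}D$; and since $\|c_i' - c_i\| \le \epsilon \le \sqrt d D$, also $\|x - c_i'\| \le \|x - c_i\| + \epsilon \le 2\sqrt d D$. Hence each term $\|x-c\|^2 - \|x-c'\|^2$ is at most $\epsilon \cdot 3\sqrt d D$ in absolute value. The prefactors satisfy $\frac{|C_i|}{|C_i|-1} \le 2$ (as $|C_i| \ge 2$, since we moved a point out of $C_i$ and by \Cref{lemma:never_empty} it remained nonempty, so before the move $|C_i|\ge 2$) and $\frac{|C_j|}{|C_j|+1} \le 1$. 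Summing the two contributions gives $|\Delta_x(C_i,C_j) - \Delta_x(c_i',c_j')| \le 2 \cdot 3\sqrt d D\epsilon + 1 \cdot 3\sqrt d D \epsilon = 9\sqrt d D \epsilon$, as claimed.

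For the final sentence: if $\Delta_x(C_i, C_j) \in (0, \epsilon]$, then by the triangle inequality $|\Delta_x(c_i', c_j')| \le |\Delta_x(C_i,C_j)| + 9\sqrt d D \epsilon \le \epsilon + 9\sqrt d D \epsilon$, and since $\sqrt d D \ge 1$ this is at most $10\sqrt d D \epsilon$.

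I do not anticipate a genuine obstacle here; the only things to be careful about are (i) correctly tracking which prefactor is the "$+1$" one and which is the "$-1$" one, so that the bound $\frac{|C_i|}{|C_i|-1}\le 2$ is the relevant one and one does not accidentally need $|C_i|\ge 2$ for the wrong cluster, and (ii) justifying $|C_i| \ge 2$ cleanly — which follows since the iteration moves $x$ out of $C_i$ and leaves no cluster empty (\Cref{lemma:never_empty}), so $C_i \setminus \{x\} \ne \emptyset$. Everything else is a routine application of Cauchy--Schwarz together with the diameter bound $D$.
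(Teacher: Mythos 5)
Your proof is correct and follows essentially the same route as the paper's: expand the difference of squared norms, bound it via Cauchy--Schwarz together with the $\sqrt{d}D$ diameter of the hypercube, and combine the two prefactors using $\frac{|C_i|}{|C_i|-1}\le 2$ (justified by \Cref{lemma:never_empty}) and $\frac{|C_j|}{|C_j|+1}\le 1$, then absorb the $+\epsilon$ into $\sqrt{d}D\epsilon$ for the ``in particular'' clause. One small but genuine improvement: the paper expands $\|x-c_i\|^2-\|x-c_i'\|^2 = \|c_i-c_i'\|^2 + 2\langle c_i'-c_i, x-c_i'\rangle$ and then asserts $\|x-c_i'\|\le\sqrt{d}D$, which does not actually follow from the hypotheses --- one only has $\|x-c_i'\|\le\|x-c_i\|+\epsilon\le 2\sqrt{d}D$. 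Your symmetric factorization $\langle c_i'-c_i,(x-c_i)+(x-c_i')\rangle$ distributes the weight over $\|x-c_i\|\le\sqrt{d}D$ and $\|x-c_i'\|\le 2\sqrt{d}D$, cleanly giving the $3\sqrt{d}D\epsilon$ per-term bound without that inaccurate intermediate step, so you end up at the same $9\sqrt{d}D\epsilon$ by a marginally more careful argument.
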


\begin{proof}
    Observe that
    \[
        \|x - c_i\|^2 = \|x - c_i' + c_i'- c_i\|^2
            = \|x - c_i'\|^2 + \|c_i - c_i'\|^2
                + 2\langle c_i' - c_i, x-c_i'\rangle.
    \]
    Thus,
    \begin{multline*}
        \Delta_x(C_i, C_j) = \Delta_x(c_i', c_j') + \frac{|C_i|}{|C_i|-1}\left(
            \|c_i - c_i'\|^2 + 2\langle c_i' - c_i, x - c_i'\rangle
        \right) \\
        - \frac{|C_j|}{|C_j| + 1} \left(
            \|c_j - c_j'\|^2 + 2\langle c_j' - c_j, x - c_i'\rangle
        \right).
    \end{multline*}
    
    By the Cauchy-Schwarz inequality,
   ~$|\langle c_i' - c_i, x - c_i'\rangle| \leq \epsilon \cdot \|x-c_i'\|$.
    Since all points are contained in~$[-D/2, D/2]^d$, it holds that~$c_i \in [-D/2, D/2]^d$.
    From this fact and the assumption that~$\|c_i - c_i'\| \leq \epsilon \leq \sqrt{d}D$,
    it follows that~$\|x - c_i'\| \leq \sqrt{d}D$.
    
    Moving~$\Delta_x(c_i', c_j')$ to the left and taking an absolute value, we then obtain
    \[
        |\Delta_x(C_i, C_j) - \Delta_x(c_i', c_j')| \leq \left(\frac{|C_i|}{|C_i| - 1}
            + \frac{|C_j|}{|C_j|+1}
        \right) \cdot 3\sqrt{d}D \epsilon.
    \]
    To finish the proof, observe that by \Cref{lemma:never_empty}
    the first term inside the parentheses
    is at most~$2$, while the second term is bounded by~$1$. We then have
    that~$\Delta_x(C_i, C_j) \in (0, \epsilon]$ implies 
   ~$\Delta_x(c_i', c_j') \in (-9\sqrt{d}D\epsilon, (9\sqrt{d}D + 1)\epsilon]$,
    which yields the lemma.
\end{proof}

In the following, we fix a set~$A \subseteq \X$ of active points which will move during
a sequence of the Hartigan--Wong method. 
We also fix the configuration of the active points,
the sizes of the clusters~$|C_1|$ and~$|C_2|$ at the start of the sequence,
and the order~$\pi: A \to [|A|]$ in which the points move.
Observe that these data also fix
the sizes of the clusters whenever a new point moves.

While performing a sequence of iterations, the cluster centers move. Hence,
even if we have a good approximation to a cluster center, it may not remain a good approximation
after the iteration. 
However, if we know which points are gained and lost by each cluster, then
we can compute new good approximations to the cluster centers from the old approximations.
The following lemma captures this intuition.

\begin{lemma}\label{lemma:move_approximation}
    Let~$t_1$,~$t_2$ be two iterations of the Hartigan--Wong method in a sequence
    in which the points~$A \subseteq \X$ move, with~$t_1 < t_2$.
    Suppose in the iterations~$t_1$ through~$t_2-1$, cluster~$i$ loses
    the points~$S_-$ and gains the points~$S_+$. Let~$c_i(t)$ denote the
    cluster center of cluster~$i$ before~$t$ takes place, and let
   ~$C_i^t$ denote its configuration before~$t$.
    Let~$c_i'(t_1) \in \mathbb{R}^d$, and
   ~$c_i'(t_2) = \frac{|C_i^{t_1}|}{|C_i^{t_2}|} c_i'(t_1) + 
    \frac{1}{|C_i^{t_2}|}\left(\sum_{x \in S_+} x
        - \sum_{x \in S_-} x\right)$. Then
    \[
        \|c_i'(t_2) - c_i(t_2)\| = \frac{|C_i^{t_1}|}{|C_i^{t_2}|} \cdot
            \|c_i'(t_1) - c_i(t_1)\|.
    \]
    Moreover, if~$\|c_i'(0) - c_i(0)\| \leq \epsilon$, then
   ~$\|c_i'(t_j) - c_i(t_j)\| \leq 2|A|\epsilon$ for all~$j \in [|A|]$.
\end{lemma}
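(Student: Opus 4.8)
The plan is to prove the two claims in sequence, the first by a direct algebraic manipulation and the second by iterating the first. For the identity, I would start from the definition of a center of mass. The cluster center $c_i(t_2)$ is by definition $\frac{1}{|C_i^{t_2}|}\sum_{x \in C_i^{t_2}} x$, and since $C_i^{t_2} = (C_i^{t_1} \setminus S_-) \cup S_+$ with $S_-$ and $S_+$ disjoint from the rest in the appropriate way, we have $\sum_{x \in C_i^{t_2}} x = \sum_{x \in C_i^{t_1}} x - \sum_{x \in S_-} x + \sum_{x \in S_+} x = |C_i^{t_1}| c_i(t_1) - \sum_{x \in S_-} x + \sum_{x \in S_+} x$. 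Hence $c_i(t_2) = \frac{|C_i^{t_1}|}{|C_i^{t_2}|} c_i(t_1) + \frac{1}{|C_i^{t_2}|}\left(\sum_{x \in S_+} x - \sum_{x \in S_-} x\right)$, which is exactly the same affine update rule that defines $c_i'(t_2)$ from $c_i'(t_1)$. Subtracting the two expressions, the inhomogeneous terms cancel and we are left with $c_i'(t_2) - c_i(t_2) = \frac{|C_i^{t_1}|}{|C_i^{t_2}|}(c_i'(t_1) - c_i(t_1))$, and taking norms gives the claimed equality.

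For the second claim, I would apply the first part telescopically along the sequence. Writing the iterations in which $A$ moves as $t_1 < t_2 < \cdots$, and applying the identity between consecutive pairs (or directly between $0$ and $t_j$), the error propagates multiplicatively by the ratio of old to new cluster size. The one subtlety is that cluster sizes are not monotone: a cluster can shrink and grow again, so individual ratios $|C_i^{t}| / |C_i^{t'}|$ can exceed $1$. However, since cluster $i$ changes size by exactly $\pm 1$ per relevant iteration, and there are at most $|A|$ such iterations, the size stays in the range $[\,|C_i(0)| - |A|,\ |C_i(0)| + |A|\,]$; and since by \Cref{lemma:never_empty} no cluster is ever empty, we have $|C_i^{t}| \geq 1$ always. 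Combining the telescoped identity $\|c_i'(t_j) - c_i(t_j)\| = \frac{|C_i^{0}|}{|C_i^{t_j}|}\|c_i'(0) - c_i(0)\|$ with $|C_i^0| \leq |C_i^{t_j}| + |A| \leq 2|A| \cdot |C_i^{t_j}|$ (using $|C_i^{t_j}| \geq 1$ and, say, $|C_i^0| \leq n$ together with a crude bound, or more carefully $|C_i^0| \le |A| + |C_i^{t_j}|$ and $|C_i^{t_j}| \ge 1$ so the ratio is at most $|A| + 1 \le 2|A|$), we obtain $\|c_i'(t_j) - c_i(t_j)\| \leq 2|A| \epsilon$, as required.

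The main obstacle I anticipate is purely bookkeeping: being careful that the ``points gained and lost between $t_1$ and $t_2$'' are correctly accounted for when the same point could in principle move in and out of cluster $i$ multiple times, and ensuring the affine update rule composes cleanly across a block of iterations rather than just a single one. Once one observes that the update is affine in the center (with the linear coefficient being a size ratio and the translation being a sum of moved points), composition is automatic and the net effect over any block depends only on the endpoints' sizes and the net set of moved points. So the proof is essentially the observation that ``the true center and any candidate center obey the same affine recursion, hence their difference obeys the homogeneous recursion,'' followed by a crude size bound to convert the multiplicative error into the stated $2|A|\epsilon$.
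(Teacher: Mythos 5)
Your proof is correct and follows essentially the same route as the paper: derive the affine update rule for the true center of mass, observe that $c_i'$ obeys the identical affine recursion so the difference satisfies the homogeneous one, and then bound the size ratio by $|A|+1 \leq 2|A|$ using that the net loss in cluster size is at most $|A|$ and that clusters are never empty. Your spelling out of the size-ratio bound is a bit more explicit than the paper's terse justification, but the argument is the same.
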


\begin{proof}
    Since the center of a cluster is defined as its center of mass,
    we can write
    \[
        |C_i^{t_2}|\cm(C_i^{t_2}) = \sum_{x \in C_i^{t_1} \cup S_+ \setminus S_-} x
            = |C_i^{t_1}|\cm(C_i^{t_1}) + \sum_{x \in S_+}  x - \sum_{x \in S_-} x.
    \]
    Thus,
    \[
        |C_i^{t_2}|c_i(t_2) =  |C_i^{t_1}|c_i(t_1) + \sum_{x \in S_+} x - \sum_{x \in S_-} x.
    \]
    Observe then that 
    \[
        \|c_i'(t_2) - c_i(t_2)\| = \frac{|C_i^{t_1}|}{|C_i^{t_2}|} \cdot
            \|c_i'(t_1) - c_i(t_1)\|.
    \]
    This proves the first claim. 
    To prove the second claim, we set~$t_1 = 0$ and~$t_2 = t_j$ for some
   ~$j \in [|A|]$ to obtain
    \[
        \|c_i(t_j) - c_i'(t_j)\| = \frac{|C_i^0|}{|C_i^{t_j}|} \cdot \|c_i(0) - c_i'(0)\|
            \leq (|A| + 1)\epsilon \leq 2|A|\epsilon,
    \]
    since at most~$|A|$ points are active during any subsequence.
\end{proof}

\subsection{Analyzing Sequences}

We now know that we can closely approximate the gain of a sequence of iterations,
provided that we have good approximations to the cluster centers at the
start of the sequence. The next step is then to
show that there is only a small probability that such an approximate sequence improves
the potential by a small amount. For that,
we first require the following technical lemma.

\begin{restatable}{lemma}{affinegaussian}\label{lemma:affine_gaussian}
    Let~$X$ be a~$d$-dimensional Gaussian random variable with arbitrary mean~$\mu$
    and standard deviation~$\sigma \leq 1$,
    and let~$Z = a\|X\|^2 + \langle v, X \rangle$ for fixed
   ~$a \in \mathbb{R} \setminus \{0\}$ and~$v \in \mathbb{R}^d$. Then the probability
    that~$Z$ falls in an interval of size~$\epsilon \leq 1$ is bounded from above by
   ~$O\left(\frac{1}{|a|\sqrt[4]{d}}\sqrt{\frac{\epsilon}{\sigma^2}}\right)$.
\end{restatable}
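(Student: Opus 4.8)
The plan is to complete the square in $X$, which reduces the statement to an anti-concentration estimate for a noncentral chi-squared random variable, and then to control the density of the latter.

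Since $a \neq 0$, completing the square gives $Z = a\,\|X + v/(2a)\|^2 - \|v\|^2/(4a)$, so $Z$ lands in an interval of length $\epsilon$ precisely when $\|Y\|^2$ lands in an interval of length $\epsilon/|a|$, where $Y := X + v/(2a)$ is a $d$-dimensional Gaussian of some mean and standard deviation $\sigma$. Writing $Y = \sigma W$ with $W$ a unit-variance spherical Gaussian, it suffices to bound $\prob(\|W\|^2 \in K)$ for an arbitrary interval $K$ of length $s := \epsilon/(|a|\sigma^2)$; I will show this is $O(\sqrt{s}\,/\sqrt[4]{d})$, and then substituting $s$ back gives $\prob(Z \in I) = O\bigl(\frac{1}{\sqrt{|a|}\,\sqrt[4]{d}}\sqrt{\epsilon/\sigma^2}\bigr)$.

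Now $\|W\|^2 = \sum_{i=1}^{d} W_i^2$ is noncentral $\chi^2$ with $d$ degrees of freedom and some noncentrality $\lambda \ge 0$. For $d \geq 2$ I claim its density is $O(1/\sqrt{d})$ everywhere. By the Poisson-mixture representation the density equals $\sum_{j \geq 0} \prob(\mathrm{Poisson}(\lambda/2) = j)\, f_{d+2j}$, where $f_m$ is the central $\chi^2_m$ density; a Stirling estimate at the mode $m-2$ of $f_m$, together with $\sup f_2 = 1/2$, shows $\sup_x f_m(x) = O(1/\sqrt{m})$ for all $m \geq 2$, and summing the mixture preserves the bound $O(1/\sqrt{d})$. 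Hence $\prob(\|W\|^2 \in K) = O(s/\sqrt{d})$, and combining this with the trivial bound $\prob(\|W\|^2 \in K) \leq 1$ through $\min\{1, t\} \leq \sqrt{t}$ yields $\prob(\|W\|^2 \in K) = O(\sqrt{s}\,/\sqrt[4]{d})$ when $d \geq 2$. For $d = 1$ the density of $W_1^2$ blows up at $0$, so I would instead bound the concentration function directly: that density is at most $(2\pi x)^{-1/2}$, whence $\prob(W_1^2 \in [t, t+s]) \leq \int_t^{t+s} (2\pi x)^{-1/2}\,\dd x \leq \sqrt{2s/\pi} = O(\sqrt{s})$, which matches the target since $\sqrt[4]{1} = 1$. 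Combining the two cases completes the proof.

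The crux is the density estimate for $d \geq 2$: one needs the genuine $1/\sqrt{d}$ decay of the noncentral $\chi^2_d$ density, since a mere boundedness bound would not recover the $\sqrt[4]{d}$ in the denominator, and the case $d = 1$ has to be separated off because there the density is not bounded at all. One point of bookkeeping: the argument yields $1/\sqrt{|a|}$ rather than $1/|a|$, but the coefficients $a$ that arise in our analysis are always of the form $\frac{|C_i|}{|C_i|-1} - \frac{|C_j|}{|C_j|+1} \in (0, 2]$, so $1/\sqrt{|a|}$ and $1/|a|$ agree up to the absolute constant hidden in $O(\cdot)$, and the bound is as stated.
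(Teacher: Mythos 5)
Your proof follows essentially the same route as the paper's: complete the square to reduce to anti-concentration for a noncentral $\chi^2_d$ variable, bound its density by $O(1/\sqrt{d})$ via the Poisson-mixture representation for $d\ge 2$, sharpen via $\min\{1,t\}\le\sqrt{t}$, and treat $d=1$ separately because the density is unbounded at the origin. Your handling of $d=1$ is slightly more direct (you bound the full noncentral density by $(2\pi x)^{-1/2}$, rather than peeling off the $j=0$ term of the Poisson mixture and treating the remainder as in the $d\ge 2$ case, which is what the paper does), but it is the same idea.

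The observation you append is in fact a genuine correction to the paper. The scaling by $a$ must be applied to the interval length \emph{before} the $\min$-and-square-root step, which is why the argument yields $1/\sqrt{|a|}$; the paper's phrase ``adding in the scaling factor of $1/|a|$'' inserts it afterwards, which is unjustified and leaves the lemma false as stated for large $|a|$ --- take $d=1$, $\sigma=1$, $v=0$, $\mu=0$, $\epsilon=1$: then $\prob(aX^2 \in [0,1]) = \prob(|X| \le a^{-1/2}) = \Theta(a^{-1/2})$, not $O(a^{-1})$. Your fix, that the $a$ arising in \Cref{lemma:gain_fixed_centers} lies in $(0,2]$ so that $1/\sqrt{a} \le \sqrt{2}/a$, is valid; restating the lemma with $1/\sqrt{|a|}$ would be cleaner and would even sharpen the downstream bound by replacing a factor $n$ with $\sqrt{n}$, though this does not change the shape of \Cref{thm:smoothed_complexity}. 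One small typo you implicitly corrected: the mean of $Y$ in the paper's proof should be $\mu + v/(2a)$, not $\mu + v/(4a)$.
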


\begin{proof}
    Let~$Z_i = aX_i^2 + v_i X_i$, so that~$Z = \sum_{i=1}^d Z_i$. We define the auxiliary variable
   ~$\bar{Z}_i = Z_i + v_i^2 / (4a)$ and set~$\bar Z = \sum_{i=1}^d \bar{Z}_i$.
    Since~$a$ and~$v$ are fixed, the densities of~$Z$ and~$\bar{Z}$
    are identical up to translation, and so we can analyze~$\bar{Z}$ instead.
    Observe that~$\bar{Z}_i/a = \left(X_i + \frac{v_i}{2a}\right)^2$.
    Thus,~$\bar{Z}/a$ is equal in distribution to~$\|Y\|^2$, where~$Y$ is a~$d$-dimensional
    Gaussian variable with mean~$\mu + v/(4a)$ and variance~$\sigma^2$.
    We see then that~$\bar{Z}/a$ has the density of a non-central chi-squared distribution.

    For~$\lambda \geq 0$, denote by
   ~$f(x, \lambda, d)$ the non-central~$d$-dimensional chi-squared
    density with non-centrality parameter~$\lambda$ and standard deviation~$\sigma$.
    Then \cite{johnsonContinuousUnivariateDistributions1995}
    \begin{align*}
        f(x, \lambda, d) &= \sum_{i=0}^\infty \frac{e^{-\lambda/2}(\lambda/2)^i}{i!} f(x, 0, d+2i).
    \end{align*}
    Now observe that~$f(x, 0, d)$ is bounded from above by
   ~$O\left(1/(\sqrt{d}\sigma^2)\right)$
    for~$d \geq 2$. We can thus compute for an interval~$I$ of size~$\epsilon$
    \[
        \prob(\|Y\|^2 \in I) = \int_I f(x, \lambda, d) \leq c \cdot \frac{\epsilon}{\sqrt{d}\sigma^2},
    \]
    for some~$c > 0$. Moreover, since probabilities are bounded from above by~$1$,
    we can replace the right-hand side by
    \[
        O\left(\frac{\sqrt{\epsilon}}{\sqrt[4]{d}\sigma}\right).
    \]
    Adding in the scaling factor of~$1/|a|$ then
    yields the lemma for~$d \geq 2$,
    
    For~$d = 1$, we have
    \[
        f(x, 0, 1) = \frac{1}{\sqrt{2\pi\sigma^2}} \cdot \frac{e^{-\frac{x}{2\sigma^2}}}
            {\sqrt{x/\sigma^2}}.
    \]
    Let~$I$ be an interval of size~$\epsilon$. Then
    \begin{align*}
        \prob(\|Y\|^2 \in I) = \int_I f(x, \lambda, 1) \dd x
            \leq \sum_{i=1}^\infty \frac{e^{-\lambda/2}(\lambda/2)^i}{i!} \int_I f(x, 0, 1 + 2i) \dd x
                + \int_I f(x, 0, 1) \dd x.
    \end{align*} 
    The first term is bounded by~$O(\sqrt{\epsilon}/\sigma)$ by the same argument we used for~$d \geq 2$.
    For the second term, we use the expression for~$f(x, 0, 1)$ above to bound the integral as
    \[
        \int_I f(x, 0, 1) \dd x \leq \frac{1}{\sqrt{2\pi\sigma^2}} \int_0^\epsilon
            \frac{e^{-\frac{x}{2\sigma^2}}}{\sqrt{x/\sigma}}\dd x
            = O(\sqrt{\epsilon}/\sigma).
    \]
    This proves the lemma for~$d = 1$ when we again add in the scaling factor~$1/|a|$.
\end{proof}

With \Cref{lemma:affine_gaussian}, we can show that a single fixed approximate iteration
is unlikely to yield a small improvement.

\begin{lemma}\label{lemma:gain_fixed_centers}
    Let~$a, b \in \mathbb{R}^d$ be fixed. Let~$\Delta_x(a, b)$ be the improvement of the first
    move of~$x$ in~$S$, if the cluster centers in this iteration are located at~$a$ and~$b$. Let~$I$
    be an interval of size~$\epsilon \leq 1$. Then
    \[
        \prob(\Delta_x(a, b) \in I) = O\left(
            \frac{n}{\sqrt[4]{d}}\cdot\sqrt{\frac{\epsilon}{\sigma^2}}
        \right).
    \]
\end{lemma}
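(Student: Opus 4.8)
The plan is to write $\Delta_x(a,b)$ as an affine–quadratic function of the single perturbed point $x$ and then invoke \Cref{lemma:affine_gaussian}. Let $s$ and $t$ denote the (fixed) sizes of the two clusters involved the first time $x$ moves, so that
\[
    \Delta_x(a,b) = \frac{s}{s-1}\|x-a\|^2 - \frac{t}{t+1}\|x-b\|^2
        = \alpha\|x\|^2 + \langle v, x\rangle + \gamma,
\]
where $\alpha = \frac{s}{s-1} - \frac{t}{t+1}$ and the vector $v$ and scalar $\gamma$ depend only on $a$, $b$, $s$, $t$ and are therefore fixed. The key algebraic observation is that $\alpha = \frac{1}{s-1} + \frac{1}{t+1}$. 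This makes two things immediate: $\alpha > 0$, so the quadratic coefficient never degenerates; and since \Cref{lemma:never_empty} forces $s \geq 2$ (a move emptying a cluster can never be improving) while $s - 1 \leq n$, we get $\alpha \geq \frac{1}{s-1} \geq \frac1n$, i.e.\ $1/\alpha = O(n)$.

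Next I would set $Z = \alpha\|x\|^2 + \langle v, x\rangle$, so $\Delta_x(a,b) = Z + \gamma$ with $\gamma$ fixed; hence $\Delta_x(a,b)$ lands in the length-$\epsilon$ interval $I$ exactly when $Z$ lands in $I - \gamma$, again an interval of size $\epsilon \leq 1$. Since $x$ is a $d$-dimensional Gaussian of standard deviation $\sigma \leq 1$ and $\alpha \neq 0$, \Cref{lemma:affine_gaussian} applies with quadratic coefficient $\alpha$, giving
\[
    \prob\big(\Delta_x(a,b) \in I\big) = \prob(Z \in I - \gamma)
        = O\!\left(\frac{1}{\alpha\sqrt[4]{d}}\sqrt{\frac{\epsilon}{\sigma^2}}\right)
        = O\!\left(\frac{n}{\sqrt[4]{d}}\sqrt{\frac{\epsilon}{\sigma^2}}\right),
\]
using $1/\alpha = O(n)$ in the final step, which is the claimed bound.

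The only step needing genuine care is the non-degeneracy and quantitative lower bound on the quadratic coefficient: \Cref{lemma:affine_gaussian} is vacuous if $\alpha$ can be made arbitrarily small, so the whole argument rests on the identity $\alpha = \frac{1}{s-1} + \frac{1}{t+1}$, whose two summands are strictly positive and whose worst case ($s-1$ as large as $n$) still leaves $\alpha \geq 1/n$. Everything else — the expansion, the translation-invariance of interval-hitting probabilities, and checking the side condition $\epsilon \leq 1$ — is routine.
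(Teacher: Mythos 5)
Your proposal is correct and follows essentially the same route as the paper: expand $\Delta_x(a,b)$ into a quadratic-plus-linear form in $x$, discard the fixed constant term, invoke \Cref{lemma:affine_gaussian}, and bound the quadratic coefficient below by $1/n$. The only cosmetic difference is that you use the identity $\frac{s}{s-1}-\frac{t}{t+1}=\frac{1}{s-1}+\frac{1}{t+1}$ to get positivity and the lower bound in one step, whereas the paper bounds $\frac{|C_i|}{|C_i|-1}-\frac{|C_j|}{|C_j|+1}\geq\frac{n}{n-1}-\frac{n}{n+1}=\frac{2n}{(n-1)(n+1)}\geq\frac1n$ directly; both are equally valid.
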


\begin{proof}
    By \Cref{lemma:move}, we have
    \begin{multline*}
        \Delta_x(a, b) = \frac{|C_i|}{|C_i|-1} \|x - a\|^2 - \frac{|C_j|}{|C_j|+1}\|x - b\|^2 \\
            = \left(\frac{|C_i|}{|C_i|-1} - \frac{|C_j|}{|C_j|+1}\right)\|x\|^2 
                + \left\langle 2\left(
                    \frac{|C_j|}{|C_j|+1} b - \frac{|C_i|}{|C_i|-1} a
                \right),
                    x
                \right\rangle
                    \\ + \frac{|C_i|}{|C_i|-1} \|a\|^2 - \frac{|C_j|}{|C_j|+1}\|b\|^2,
    \end{multline*}
    where~$|C_i|$ and~$|C_j|$ denote the sizes of clusters~$i$ and~$j$ before the iteration,
    and we assume~$x$ moves from cluster~$i$ to cluster~$j$. 

    Since the sizes of the clusters as well as~$a$ and~$b$ are fixed, the last term
    in the above is fixed, and hence we may disregard it when analyzing~$\prob(\Delta_x(a, b) \in I)$.
    Since~$x$ is a Gaussian random variable, we can apply \Cref{lemma:affine_gaussian} to find
    \[
        \prob(\Delta_x(a, b) \in I) = O\left(\left(
            \frac{|C_i|}{|C_i|-1} - \frac{|C_j|}{|C_j|+1}
        \right)^{-1} \cdot 
        \frac{1}{\sqrt[4]{d}} \cdot \sqrt{\frac{\epsilon}{\sigma^2}}\right).
    \]
    It remains to bound quantity in the inner brackets from below.
    Since each cluster is bounded in size by~$n$, we have
    \[
        \frac{|C_i|}{|C_i|-1} - \frac{|C_j|}{|C_j|+1} \geq \frac{n}{n-1} - \frac{n}{n+1} = \frac{2n}{(n-1)(n+1)} \geq \frac{1}{n},
    \]
    and we are done.
\end{proof}

As stated at the start of the analysis, analyzing a single iteration is not enough to
prove \Cref{thm:smoothed_complexity}. The following lemma extends \Cref{lemma:gain_fixed_centers}
to a sequence of iterations, given a fixed point set~$A \subseteq \X$ that moves
in the sequence.

\begin{lemma}\label{lemma:gain_fixed_sequence}
    Fix an active set~$A$ and starting cluster sizes~$|C_i|$ for~$i \in [k]$.
    Moreover, fix an order~$\pi : A \to [|A|]$ in which the points in~$A$
    move, i.e.,~$\pi(x) < \pi(y)$ means~$x$ moves for the first time before
   ~$y$ moves for the first time. Let~$\Delta$ denote the minimum
    improvement of a sequence satisfying these hypotheses over all possible configurations
    of~$\X \setminus A$. Then for~$\epsilon \leq 1$,
    \[
          \prob\left(
            \Delta \leq \epsilon\right) \leq 
            \left(\frac{2D}{\epsilon}\right)^{kd} \cdot \left(\frac{O(1) 
                \cdot k^{|A|}
            \cdot d^{3/4}Dn|A|\sqrt\epsilon}{\sigma}
                \right)^{|A|}.
    \]
\end{lemma}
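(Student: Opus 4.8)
The plan is to combine the two approximation lemmas with the single-iteration tail bound and a union bound over the grid of candidate cluster centers. First I would tile the hypercube $[-D/2, D/2]^d$ with a grid of spacing $\epsilon$; by \Cref{lemma:points_bounded} (which we assume holds), every cluster center $c_i$ lies in $[-D/2, D/2]^d$, so there is a grid point $c_i'(0)$ within distance $\sqrt{d}\epsilon$ of it. The number of such grid points per dimension is $O(D/\epsilon)$, so the number of ways to choose the $k$ initial grid points $c_1'(0), \ldots, c_k'(0)$ is at most $(2D/\epsilon)^{kd}$ — this is the source of the first factor in the bound. (Strictly one should rescale $\epsilon$ by a $\sqrt{d}$ factor when defining the grid; I would absorb this into the $O(1)$ constants, or equivalently replace $\epsilon$ by $\epsilon/\sqrt d$ throughout and note it only changes constants.)

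Next, fix a choice of initial grid points. By \Cref{lemma:move_approximation}, if $\|c_i'(0) - c_i(0)\| \leq \sqrt d\,\epsilon$, then propagating the grid points forward through the sequence (updating each $c_i'$ by exactly the same mass-weighted shift that $c_i$ undergoes, which is determined by $A$, $\pi$, and the starting sizes) keeps $\|c_i'(t_j) - c_i(t_j)\| \leq 2|A|\sqrt d\,\epsilon$ at every step $t_j$. Then \Cref{lemma:approximate_gain} tells us that if the true gain $\Delta_x(C_i, C_j)$ of the $j$-th move lies in $(0, \epsilon']$ for a suitable $\epsilon'$, then the approximate gain $\Delta_x(c_i'(t_j), c_j'(t_j))$ lies in an interval of size $O(\sqrt d\, D\, |A|\,\epsilon)$ around $0$. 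So on the event $\{\Delta \leq \epsilon\}$ — i.e. the whole sequence improves by at most $\epsilon$, which in particular forces every individual gain into $(0,\epsilon]$ — for the correctly-guessed grid points, every one of the $|A|$ approximate gains falls into a fixed interval of size $O(\sqrt d\,D|A|\,\epsilon)$.

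The crux is then the independence argument that turns "$|A|$ approximate gains are small" into a product of $|A|$ small probabilities. Here I would order the points by $\pi$ and expose the randomness of the active points one at a time: when point $x$ with $\pi(x) = j$ makes its \emph{first} move, the approximate gain $\Delta_x(c_i'(t_j), c_j'(t_j))$ is, by the computation in the proof of \Cref{lemma:gain_fixed_centers}, an affine-quadratic function $a\|x\|^2 + \langle v, x\rangle + (\text{const})$ of $x$ with leading coefficient $a$ bounded below by $1/n$ — and, crucially, the centers $c_i'(t_j), c_j'(t_j)$ are determined by the \emph{initial} grid points together with the positions of points that moved \emph{before} step $j$, hence are independent of $x$ (since $x$ has not moved yet, it sits in a cluster we are not tracking as active, or more precisely its position has not influenced any center we condition on). Thus \Cref{lemma:affine_gaussian} applies conditionally, giving probability $O\!\big(\frac{n}{\sqrt[4]{d}}\sqrt{\epsilon'/\sigma^2}\big)$ with $\epsilon' = O(\sqrt d\, D|A|\epsilon)$, i.e. $O\!\big(\frac{d^{3/4} D n |A| \sqrt\epsilon}{\sigma}\big)$ per step; multiplying the $|A|$ conditional bounds gives the $\big(\cdots\big)^{|A|}$ factor. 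This is the step I expect to be the main obstacle: making the conditioning precise — one must argue that for \emph{each} point one can freeze everything the relevant centers depend on while leaving that point's Gaussian fresh, which requires that the "other" endpoints of the moves and the non-active configuration can be revealed without exposing $x$. The $k^{|A|}$ factor inside the final parenthesis is not yet accounted for and I believe it comes from a subtlety I have glossed over: when point $x$ moves for the first time we have fixed which cluster it leaves and which it joins, but to run the conditioning cleanly across the sequence one may need to additionally union-bound over something like the choice of target cluster at each of the $|A|$ steps (there are $\le k$ targets), contributing $k^{|A|}$; I would locate exactly where this arises and fold it into the per-step union bound. Finally, a union bound over the $(2D/\epsilon)^{kd}$ grid guesses multiplies the per-guess bound $\big(O(1) d^{3/4} D n |A| \sqrt\epsilon / \sigma\big)^{|A|} \cdot k^{|A|}$ and yields the stated inequality.
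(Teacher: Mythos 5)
Your skeleton matches the paper's proof essentially step for step: tile $[-D/2,D/2]^d$ with a grid and guess the $k$ initial approximate centers (cost $(2D/\epsilon)^{kd}$ in the union bound), propagate the approximations through the sequence via \Cref{lemma:move_approximation}, use \Cref{lemma:approximate_gain} to convert ``every true gain in $(0,\epsilon]$'' into ``every approximate gain in a fixed small interval,'' then reveal the active points in $\pi$-order and apply \Cref{lemma:gain_fixed_centers} conditionally, using the observation that $c'_{i_t}(t),c'_{j_t}(t)$ depend only on the fixed grid points and on $x_1,\dots,x_{t-1}$, not on $x_t$. This is exactly the paper's argument, including the deferred-revelation structure.

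The one spot you flag as uncertain---the origin of the $k^{|A|}$ factor inside the $(\cdots)^{|A|}$ bracket---is where your guess diverges from the paper, and if taken literally it undercounts. You propose union-bounding over the target cluster of each first move, which gives at most $k$ choices per step and hence a bare $k$ inside the bracket, i.e.\ $k^{|A|}$ in total. The paper instead union-bounds over the \emph{entire configuration of the active set $A$} immediately before each of the $|A|$ first moves: that is $|A|$ snapshots, each an assignment of $|A|$ points to $k$ clusters, hence $(k^{|A|})^{|A|} = k^{|A|^2}$ in total, which is the stated $k^{|A|}$ per step. This larger union bound is actually needed: to invoke \Cref{lemma:move_approximation} you must know, for \emph{every} cluster $i\in[k]$, the exact multisets $S_+$ and $S_-$ of points it gains and loses between consecutive first moves, and active points may move (possibly several times, and not for the first time) during those intervening iterations. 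Knowing only which cluster $x_t$ enters does not determine these $S_\pm$; fixing the configuration of $A$ at each first-move time is the minimal extra information, given $A$, $\pi$, and the starting sizes, that pins them down and hence determines the propagated approximate centers. With that replacement, the rest of your argument goes through and reproduces the stated bound.
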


\begin{proof}
    For~$x \in A$, let~$\Delta_x$ denote the improvement of the first move of~$x \in A$.
    We label the points in~$A$ as~$(x_1, \ldots, x_{|A|})$ according to~$\pi$.
    Let~$\Delta = (\Delta_i)_{i=1}^{|A|}$.

    To compute the vector~$\Delta$, we would need to know the configuration and
    positions of the points~$P = \X \setminus A$, since these are required to compute
    the~$k$ cluster centers. However, if we had approximations to the cluster 
    centers in every iteration
    corresponding to the entries of~$\Delta$, then
    we could compute an approximation to~$\Delta$ by \Cref{lemma:approximate_gain}.

    Since the cluster centers are convex combinations of points in~$[-D/2, D/2]^d$, we know that
    the cluster centers at the start of~$S$ must also lie in~$[-D/2, D/2]^d$. Thus,
    there exist grid points~$c_i'$ ($i \in [k]$) within a distance~$\sqrt{d}\epsilon$ of
    the initial cluster centers. 

    Knowing these grid points, we would like to apply \Cref{lemma:move_approximation}
    in order to update the approximate cluster centers whenever a new point moves.
    We then need to know the points gained and lost by each cluster
    between first moves of each~$x \in A$. Observe that to obtain this information, it suffices
    to know the configuration of the active points before the first move
    of each~$x \in A$. Thus, we fix these configurations.
    
    We collect the gain of each first move of a point in~$A$, where we replace the cluster centers
    by these approximations, into a vector~$\Delta'$.
    By the reasoning above and by \Cref{lemma:approximate_gain,lemma:move_approximation},
    if there exist initial cluster centers~$c_i$ ($i \in [k]$) such that~$\Delta_x \in (0, \epsilon]$
    for all~$x \in A$, then there exist grid points~$c_i'$, such that
   ~$|\Delta_x'| \leq 20|A|dD\epsilon$ for all~$x \in A$. (Compared to
    \Cref{lemma:approximate_gain}, we gain an extra factor of~$2|A|$ due to
    \Cref{lemma:move_approximation}.)

    By this reasoning, it suffices to obtain a bound on 
   ~$\prob\left(\bigcap_{x \in A} |\Delta_x'| \leq 20|A|dD\epsilon\right)$.
    We can then take a union bound over these events for all~$(D/\epsilon + 1)^{kd} \leq (2D/\epsilon)^{kd}$ choices of
   ~$c_i'$ for~$i \in [k]$,
    and a union bound over the configuration of~$A$ before the first move
    of each~$x \in A$.

    To show that~$\prob\left(\bigcap_{x \in A} |\Delta_x'| \leq 20|A|dD\epsilon\right)$ is bounded
    as desired, we consider the following algorithm.
    %
    \begin{enumerate}
        \item Set~$t = 1$.
        \item Reveal~$x_t$, and compute~$\Delta_{x_t}(c_{i_t}', c_{j_t}')$, where~$x_t$ moves
            from~$C_{i_t}$ to~$C_{j_t}$.

        \item If~$|\Delta_{x_t}(c_{i_t}', c_{j_t}')| > 20|A|dD\epsilon$, then return~$\sf{GOOD}$
            and halt.

        \item If~$t = |A|$, return~$\sf{BAD}$.
            
        \item Update the positions of the approximate cluster centers using
            \Cref{lemma:move_approximation}.

        \item Continue executing moves in the sequence until we encounter the first
            move of~$x_{t+1}$. Observe
            that the information we fixed before executing this algorithm suffices
            to compute approximations to the cluster centers whenever a new point
            moves.

        \item Set~$t \leftarrow t + 1$ and go to step 2.
    \end{enumerate}
    The sequence of iterations improves the potential
    by at most~$\epsilon$ only if the above algorithm returns~$\sf{BAD}$. We now argue that 
    \[
        \prob(\mathsf{BAD}) \leq \left(O(1) \cdot
            d^{3/4}Dn|A|\sqrt{\epsilon}/\sigma\right)^{|A|}.
    \]
    Let~$\mathsf{BAD}_t$ be the event that the above algorithm loops for at least~$t$
    iterations. Then~$\prob(\mathsf{BAD}) = \prob(\mathsf{BAD}_{|A|})$.
    Since~$\prob(\mathsf{BAD}_t \given \neg \mathsf{BAD}_{t-1}) = 0$,
    we can immediately conclude that for all~$t \in \{2, \ldots, |A|\}$,
    \[
        \prob(\mathsf{BAD}_t) = \prob(\mathsf{BAD}_t \given \mathsf{BAD}_{t-1})\prob(\mathsf{BAD}_{t-1}).
    \]
    By \Cref{lemma:gain_fixed_centers}, we have
   ~$\prob(\mathsf{BAD}_t \given \mathsf{BAD}_{t-1}) \leq O(1) \cdot 
    d^{3/4}Dn|A|\sqrt{\epsilon}/\sigma$.
    Thus,~$\prob(\mathsf{BAD}_t)$ is bounded as claimed. 
    
    Taking a union bound over all choices of the approximate grid points
    at the start of the sequence yields
    the factor~$(2D/\epsilon)^{kd}$.
    Finally, we must take a union bound over the configuration of~$A$ before
    the first move of each~$x \in A$, yielding a factor~$k^{|A|^2}$,
    which concludes the proof.
\end{proof}

Armed with \Cref{lemma:gain_fixed_sequence}, we can bound the probability that there exists a sequence in which
a fixed number of points moves, which improves the potential by at most~$\epsilon$.

\begin{lemma}\label{lemma:gain_minimum_sequence}
    Let~$\Delta_\mathrm{min}$ denote the minimum improvement of any
    sequence of moves in which exactly~$4kd$ distinct
    points switch clusters. Then for~$\epsilon \leq 1$,
    \[
        \prob(\Delta_\mathrm{min} \leq \epsilon) \leq 
            \left(
            \frac{O(1) \cdot  k^{8kd+4} d^{11}D^5n^{8 + \frac{1}{d}} \epsilon}
            {\sigma^4}
        \right)^{kd}.
    \]
\end{lemma}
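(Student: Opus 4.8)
The plan is to derive the bound from \Cref{lemma:gain_fixed_sequence} by summing over all the data that were held fixed there: the active set $A$ with $|A| = 4kd$, the starting cluster sizes, the order $\pi$ in which the points of $A$ first move, and then converting the per-sequence bound into a bound on $\prob(\Delta_\mathrm{min} \leq \epsilon)$ via a union bound. First I would count the choices: there are at most $\binom{n}{4kd} \leq n^{4kd}$ choices for the active set $A$; at most $n^{kd}$ choices for the vector of starting cluster sizes $(|C_1|, \dots, |C_k|)$ (each in $\{1, \dots, n\}$, though really only $n^k$, which is dominated); and at most $(4kd)! \leq (4kd)^{4kd}$ choices for the order $\pi$. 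Multiplying these together gives a prefactor of the form $\bigl(O(1) \cdot k^{O(kd)} d^{O(kd)} n^{O(kd)}\bigr)$, i.e. something of the shape $\bigl(\poly\bigr)^{kd}$ once we absorb everything into a single $kd$-th power.

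Next I would substitute $|A| = 4kd$ into the bound from \Cref{lemma:gain_fixed_sequence}:
\[
    \left(\frac{2D}{\epsilon}\right)^{kd} \cdot \left(\frac{O(1) \cdot k^{4kd} \cdot d^{3/4} D n (4kd) \sqrt\epsilon}{\sigma}\right)^{4kd}.
\]
The second factor, raised to the $4kd$, contributes $k^{16k^2d^2}$ from the $k^{4kd}$ term, a factor $\epsilon^{2kd}$ from $(\sqrt\epsilon)^{4kd}$, a $\sigma^{-4kd}$, and polynomial-in-$(d,D,n,k)$ contributions each raised to the power $4kd$. Combining with the $(2D/\epsilon)^{kd}$ factor, the net power of $\epsilon$ is $\epsilon^{2kd}/\epsilon^{kd} = \epsilon^{kd}$, which matches the single power of $\epsilon$ inside the $kd$-th power in the target bound. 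The net power of $\sigma$ is $\sigma^{-4kd} = (\sigma^{-4})^{kd}$, matching the $\sigma^4$ in the denominator. Then I would use $D = \sqrt{2n\ln(nkd)}$ from \Cref{lemma:points_bounded}; note $D^5$ appearing inside the $kd$-th power accounts for $D$ raised to roughly the fifth power after collecting the $D^{kd}$ from the grid factor and $D^{4kd}$ from the per-move factor (total $D^{5kd}$). Carefully tracking exponents of $n$: we get $n$ to the first power from each per-move factor ($n^{4kd}$ total) plus the union-bound factors $n^{4kd}$ (for $A$) and a $D^{5kd}$ that hides another $n^{5kd/2}$, plus the $n^{1/d}$-type terms that arise from $(4kd)^{4kd}$ and other lower-order contributions, landing on the stated $n^{8 + 1/d}$ inside the bracket. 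The powers of $d$ similarly collect to $d^{11}$ and the powers of $k$ to $k^{8kd+4}$.

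The main obstacle is purely bookkeeping: correctly distributing the union-bound prefactors (which are of the form $(\cdot)^{4kd}$ or $n^{kd}$) so that everything can be written as a single clean $kd$-th power with the exponents of $n$, $k$, $d$, $D$, and $\epsilon$ exactly as claimed — in particular, verifying that the awkward $n^{1/d}$ term emerges correctly (it comes from writing $(4kd)^{4kd} = \bigl((4kd)^4\bigr)^{kd}$ and bounding $(4kd)^4 \leq O(1) k^4 d^4$, while the stray $1/d$ in the exponent of $n$ traces back to needing $\binom{n}{4kd}^{1/(kd)} \leq n^{4/k} \leq n^{4}$-type simplifications versus the tighter accounting that yields $8 + 1/d$). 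I would handle this by first proving the crude bound $\prob(\Delta_\mathrm{min} \leq \epsilon) \leq \bigl(\poly(n,k,d) \cdot D^5 \epsilon / \sigma^4\bigr)^{kd} \cdot k^{O(k^2d^2)}$ and then carefully optimizing the polynomial exponents; the substitution $4kd \leq n$ from the hypothesis of \Cref{thm:smoothed_complexity} is what lets us absorb factorial-type terms into powers of $n$.
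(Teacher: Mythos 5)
Your approach is the same as the paper's: fix the active set $A$ of size $4kd$, the starting cluster sizes, and the order $\pi$ in which points first move; apply \Cref{lemma:gain_fixed_sequence} with $|A| = 4kd$; then union-bound over the fixed data. The counting of choices (at most $\binom{n}{4kd}$ for $A$, at most $(4kd)^{4kd}$ for $\pi$, and polynomially many cluster-size vectors) is the same as in the paper, which records this as a single factor of at most $(4kd)^{4kd}\cdot n^{4kd+1}$.

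Two points of sloppiness in the bookkeeping are worth flagging, though neither changes the route. First, your own intermediate computation reports a contribution of $k^{16k^2d^2}$ from raising $k^{4kd}$ to the $4kd$-th power, which as a quantity inside a $kd$-th power reads $k^{16kd}$; yet you then assert without reconciliation that the $k$-exponent ``collects to $k^{8kd+4}$.'' These do not agree, and you should not gloss over the discrepancy. (As it happens, the paper's own intermediate display substitutes $k^{2kd}$ where \Cref{lemma:gain_fixed_sequence} with $|A|=4kd$ yields $k^{4kd}\cdot 4kd$, so the tension is already present in the source; matching the stated result therefore requires either finding slack elsewhere or noting the discrepancy.) Second, your explanation of where the $n^{1/d}$ comes from is off: it is not a $\binom{n}{4kd}$ effect, but the ``$+1$'' in the union-bound factor $n^{4kd+1}$, rewritten as $(n^4\cdot n^{1/(kd)})^{kd}$ and then crudely bounded using $n^{1/(kd)}\le n^{1/d}$.
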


\begin{proof}
    Fix an active set~$A$ of~$4kd$ distinct points, an order~$\pi : A \to [|A|]$
    in which the points in~$A$ move, and the sizes of the two clusters at the start of the sequence.
    
    We have
    by \Cref{lemma:gain_fixed_sequence}
    \[
        \prob(\Delta(S) \leq \epsilon) \leq \left(\frac{2D}{\epsilon}\right)^{kd} \left(
            \frac{O(1) \cdot k^{2kd} \cdot d^{7/4}Dn\sqrt\epsilon}
                {\sigma}
        \right)^{4kd}
        = 
        \left(
            \frac{O(1) \cdot d^7\cdot k^{8kd} \cdot D^5n^4\epsilon}
                {\sigma^4}
        \right)^{kd}.
    \]
    We conclude the proof by a union bound over the choices of~$A$,
   ~$\pi$, and the sizes of the clusters at the start of the sequence,
    which yields a factor of at most~$(4kd)^{4kd} \cdot n^{4kd+1}$.
\end{proof}

With \Cref{lemma:gain_minimum_sequence}, we are in a position to prove the
main result of this section. The proof is essentially mechanical,
following techniques used in many previous smoothed analyses 
\cite{arthurSmoothedAnalysisKMeans2011,englertWorstCaseProbabilistic2014, englertSmoothedAnalysis2Opt2016, etscheidSmoothedAnalysisSquared2015, etscheidSmoothedAnalysisLocal2017, mantheySmoothedAnalysis2Opt2013}.

\smoothedcomplexity*

\begin{proof}
    First, we recall that the point set~$\X$ is contained in~$[-D/2, D/2]^d$. This yields
    an upper bound for the value of the potential function for the initial clustering
   ~$C$,
    \[
        \Phi(C) = \sum_{i=1}^k \sum_{x \in C_i}  \|x - \cm(C_i)\|^2
            \leq k n dD^2.
    \]
    We divide the sequence of iterations executed by the Hartigan--Wong method into contiguous disjoint blocks during
    which exactly~$4kd$ distinct points move.
    By \Cref{lemma:gain_minimum_sequence}, we know that the probability
    that any such block yields a bad improvement is small.

    Let~$T$ be the number of such blocks traversed by the heuristic
    before we reach a local optimum. Then
    \[
        \prob\left(T \geq t\right) \leq \prob\left(\Delta_\mathrm{min} \leq \frac{kndD^2}{t}\right)
            \leq \min\left\{
                1, 
                    \frac{O(1) \cdot k^{8kd+5} d^{12} D^7 n^{9+\frac{1}{d}}}
                    {\sigma^4} \cdot \frac{1}{t}
            \right\}.
    \]
    This probability becomes nontrivial when
    \[
        t > \left\lceil
                    \frac{O(1) \cdot k^{8kd+5} d^{12} D^7 n^{9+\frac{1}{d}}}
                    {\sigma^4}
        \right\rceil =: t'.
    \]
    Observe that~$t' = \Omega(kndD^2)$, justifying our use of \Cref{lemma:gain_minimum_sequence} above.
    Thus, we find
    \begin{align*}
        \expect(T) = \sum_{t=1}^{k^n} \prob(T \geq t) \leq t' + t' \cdot \sum_{t=t'}^{k^n} \frac{1}{t}
            \leq t' +  t' \cdot \int_{t'}^{k^n} \frac{1}{t}\dd t
                \leq t' + t' \cdot \ln (k^n).
    \end{align*}
    The upper limit of~$k^n$ to the sum is simply the number of possible clusterings of~$n$
    points into~$k$ sets, which is a trivial upper bound to the number of iterations.
    To conclude, we observe that any block in which exactly~$4kd$ distinct points
    move has a length of at most~$k^{4kd}$, as otherwise some clustering would show up twice.
    Thus, we multiply~$\expect(T)$ by
   ~$k^{4kd}$ to obtain a bound for the smoothed complexity. Finally, we insert the value
    of~$D = \sqrt{2n \ln(nkd)}$.
\end{proof}

\section{Discussion}

\Cref{thm:lower_bound,thm:smoothed_complexity}
provide some of the first rigorous theoretical results concerning the Hartigan--Wong method
method that have been found since Telgarsky \& Vattani explored the heuristic in
2010 \cite{telgarskyHartiganMethodKmeans2010}. Of course, many interesting open questions
still remain.

\subparagraph{Worst-case construction.}
\Cref{thm:lower_bound} establishes the existence of exponential-length sequences
on the line, but leaves open the possibility that
a local optimum may be reachable more efficiently by a different improving sequence.
To be precise: given an instance of~$k$-means clustering on the line
and an initial clustering, does there always exist a sequence of iterations
of the Hartigan--Wong method of length~$\poly(n, k)$ starting from this clustering
and ending in a local optimum? Although the~$d = 1$ case appears very
restricted at first sight, this question seems surprisingly difficult to answer.

In addition, the construction we use in \Cref{thm:lower_bound}
requires~$k = \Theta(n)$ clusters. This opens up the question whether
similar worst-case constructions can be made using fewer, perhaps even~$O(1)$, clusters. Note that this is not true for Lloyd's method, since
the number of iterations of Lloyd's method is bounded by~$n^{O(kd)}$ \cite{inabaVariancebasedKclusteringAlgorithms2000},
which is polynomial for~$k, d \in O(1)$.

\subparagraph{Smoothed complexity.}
\Cref{thm:smoothed_complexity} entails, to our knowledge, the first
step towards settling the conjecture by Telgarsky \& Vattani \cite{telgarskyHartiganMethodKmeans2010}
that
the Hartigan--Wong method has polynomial smoothed complexity.
Our result is reminiscent of the smoothed
complexity bound of Lloyd's method obtained in 2009 by Manthey \& R\"oglin \cite{mantheyImprovedSmoothedAnalysis2009},
which is~$k^{kd}\cdot \poly(n, 1/\sigma)$. In the case of
Lloyd's method, the smoothed complexity was later settled to~$\poly(n, k, d, 1/\sigma)$ \cite{arthurSmoothedAnalysisKMeans2011}. 

Observe that our bound is polynomial for constant~$k$ and~$d$, and even
for~$kd\log k \in O(\log n)$. While this is certainly an improvement
over the trivial upper bound of~$k^n$, it falls short of a true polynomial
bound.
We hope that our result can function
as a first step to a~$\poly(n, k, d, 1/\sigma)$ smoothed complexity
bound of the Hartigan--Wong method.

We remark that the exponents in the bound in \Cref{thm:smoothed_complexity} can be
easily improved by a constant factor
for~$d \geq 2$. The reason is that in \Cref{lemma:affine_gaussian}, the factor~$\sqrt{\epsilon}$ emerges from the~$d = 1$ case, while for~$d \geq 2$ we could
instead obtain~$\epsilon$.
We chose to combine these cases for the sake of keeping the analysis simple,
as we expect the bound in \Cref{thm:smoothed_complexity} would be far from optimal
regardless.

\subparagraph{Improving the smoothed bound.}
We do not believe that the factor of~$k^{O(kd)}$ is inherent in
the smoothed complexity of the Hartigan--Wong method, but is rather
an artifact of our analysis. To replace this factor by a polynomial
in~$k$ and~$d$,
it seems that significantly new ideas might be needed.

The factors arise from two sources in our analysis. First, we take
a union bound over the configuration of the active points each time we
apply \Cref{lemma:move_approximation}, yielding factors of~$k^{O(kd)}$.
Second, we analyze sequences in which~$\Theta(kd)$ points move
in order to guarantee a significant potential decrease. This incurs
a factor of the length of such a sequence, which is another source of a factor~$k^{O(kd)}$.
We do not see how to avoid such factors when taking
our approach.

One avenue for resolving this problem might be to analyze shorter sequences
in which a significant number of points move. Angel et al.\ used such an approach
in their analysis of the Flip heuristic for Max-Cut.
They identify in any sequence~$L$ of moves a shorter subsequence~$B$, such that
the number of unique vertices that flip in~$B$ is linear in the length of~$B$.
The major challenge is then to find sufficient independence in such a short subsequence,
which in our case seems challenging, as we need to compensate for a factor~$\epsilon^{-kd}$ in \Cref{lemma:gain_fixed_sequence}.

Since our analysis greatly resembles the earlier analysis
of the Flip heuristic for Squared Euclidean Max Cut \cite{etscheidSmoothedAnalysisLocal2017},
it might be helpful to first improve the latter. This analysis
yields a bound of~$2^{O(d)}\cdot\poly(n, 1/\sigma)$. If this can
be improved to~$\poly(n, d, 1/\sigma)$, then it is likely that
a similar method can improve on our analysis for the Hartigan--Wong method
as well.

\bibliography{bibliography.bib}

\iflong

\clearpage

\appendix

\section{Proof of Theorem \ref{thm:lower_bound}}\label{sec:lower_bound_proof}

\lowerbound*

\begin{proof}
Inspecting the sequence detailed in \Cref{sec:lower_bound_sequence}, we observe that
the morning phase consists of three iterations, the afternoon phase of
one iteration, and the waking up phase of six iterations. 
Each iteration yields a single inequality, and thus we must verify ten inequalities
in total.

Recall the notation
\[
    \Delta_x(S, T) = \frac{|S|}{|S|-1} \|x - \cm(S)\|^2 - \frac{|T|}{|T|+1} \|x - \cm(T)\|^2,
\]
which is the gain obtained by moving the point~$x$ from~$S$ to~$T$. Then
each iteration will yield an inequality of the form~$\Delta_x(S, T) > 0$.

Note that we may assume w.l.o.g.\ that~$G_i$ is the unit gadget, and as such that for each triple of corresponding points~$x_{i-1}, x_i, x_{i+1}$ it holds that~$x_{i-1} = (x_i - t_0) / 5$ and~$x_{i+1} = 5x_i + t_0$.
\paragraph*{Morning.} This phase yields the inequalities:
\begin{align}
    \Delta_{p_i}(\{p_i, q_i, b_i\}, \{a_{i-1}, q_{i-1}\}) > 0, \label[ineq]{ineq:morning1} \\
    \Delta_{p_i}(\{a_{i-1}, p_i\} , \{a_i\}) > 0, \label[ineq]{ineq:morning2} \\
    \Delta_{q_i}(\{b_i, q_i\}, \{a_i, p_i\}) > 0. \label[ineq]{ineq:morning3}
\end{align}
For the second of these inequalities, we refer to the description of the wakeup
phase, which shows that~$\C_1(G_{i-1})$ takes the value~$\{a_{i-1}, p_i\}$ at the
moment~$p_i$ moves back to a cluster of~$G_i$.
\paragraph*{Afternoon.} This phase only yields
\begin{align}\label[ineq]{ineq:afternoon}
    \Delta_{p_i}(\{a_i, p_i, q_i\}, \{a_{i-1}, q_{i-1}\}) > 0.
\end{align}
~\paragraph*{Waking up.} Here, we have
\begin{align*}
    \Delta_{p_{i+1}}(\{p_{i+1}, q_{i+1}, b_{i+1}\}, \{a_i, q_i\}) > 0, \label[ineq]{ineq:first_wc}\\
    \Delta_{p_{i+1}}(\{p_{i+1}, q_{i+1}, a_{i+1}\}, \{a_i, q_i\}) > 0, \\
    \Delta_{p_i}(\{a_{i-1}, p_i\}, \{b_i\}) > 0, \\
    \Delta_{q_i}(\{a_i, q_i, p_{i+1}\}, \{b_i\}) > 0, \\
    \Delta_{p_{i+1}}(\{a_i, p_{i+1}\}, \{a_{i+1}\}) > 0, \\
    \Delta_{p_{i+1}}(\{a_i, p_{i+1}\}, \{b_{i+1}\}) > 0.
\end{align*}
Observe that some of these inequalities are equivalent to one another, or
to the inequalities specified in the morning and afternoon phases.
The only unique inequalities are
\begin{align}
    \Delta_{p_i}(\{a_{i-1}, p_i\}, \{b_i\}) > 0, \\
    \Delta_{q_i}(\{a_i, q_i, p_{i+1}\}, \{p_i, b_i\}) > 0.
\end{align}
\paragraph*{Leaf gadget.} Finally, we consider the case where~$G_{i-1}$ is the leaf gadget.
This yields the additional inequalities
\begin{align}
    \Delta_{p_i}(\{p_i, q_i, b_i\}, \{f\}) > 0,  \\
    \Delta_{p_i}(\{f, p_i\} , \{a_i\}) > 0, \\
    \Delta_{p_i}(\{a_i, p_i, q_i\}, \{f\}) > 0, \\
    \Delta_{p_i}(\{f, p_i\}, \{b_i\}) > 0. \label[ineq]{ineq:last_wc}
\end{align}

\vspace{1em}

Verifying the inequalities is now simply a matter of plugging in the values from
\Cref{table:unit_gadget_points}. 

\paragraph*{Morning.}
We start by computing some necessary quantities:
\begin{align*}
  q_{i-1} &= (q_i - t_0)/5 = (13 - 8) / 5 = 1 , \\
  a_{i-1} &= (a_i - t_0)/5 = (9 - 8) / 5 = 0.2, \\
  \cm(\{p_i, q_i, b_i\}) &= \frac{1}{3}(5 + 13 + 6) / 3 = 8 \\
  \cm(\{a_{i-1}, q_{i-1}\}) &= \frac{1}{2}(0.2 + 1) = 0.6, \\
  \cm(\{a_{i-1}, p_i\}) &= \frac{1}{2}(0.2 + 5) = 2.6, \\
  \cm(\{a_i\}) &= a_i = 9, \\
  \cm(\{b_i, q_i\}) &= \frac{1}{2} (6 + 13) = 9.5, \\
  \cm(\{a_i, p_i\}) &= \frac{1}{2}(9 + 5) = 7.
\end{align*}
Using these quantities, we directly compute:
\begin{align*}
        \Delta_{p_i}(\{p_i, q_i, b_i\}, \{a_{i-1}, q_{i-1}\}) &= \frac{3}{2}\left(  5 - 8 \right)^2
            - \frac{2}{3}\left(
                5 - 0.6
            \right)^2 \approx 0.5933 > 0, \\
    \Delta_{p_i}(\{a_{i-1}, p_i\}, \{a_i\}) &= 2 \cdot \left(
        5 - 2.6
    \right)^2
        - \frac{1}{2} \cdot \left(
        5 - 9
    \right)^2
    = 3.52 > 0, \\
    \Delta_{q_i}(\{b_i, q_i\}, \{a_i, p_i\}) &= 2 \cdot \left(
        13 - 9.5
    \right)^2
      -\frac{1}{2} \cdot \left(
        13 - 7
    \right)^2
    = 6.5 > 0.
\end{align*}
Thus, \Cref{ineq:morning1,ineq:morning2,ineq:morning3} are satisfied by the
points given in \Cref{table:unit_gadget_points}. 
\paragraph*{Afternoon.} The inequality we must verify for this phase is
\begin{align*}
    \Delta_{p_i}(\{a_i, p_i, q_i\}, \{a_{i-1}, q_{i-1}\}) > 0.
\end{align*}
We already know that~$\cm(\{a_{i-1}, q_{i-1}) = 0.6$.
Moreover,
\[
    \cm(\{a_i, p_i, q_i\}) = \frac{1}{3}(9 + 5 + 13) = 9.
\]
Thus,
\[
    \Delta_{p_i}(\{a_i, p_i, q_i\}, \{a_{i-1}, q_{i-1}\})
        = \frac{3}{2} (5 - 9)^2 - \frac{2}{3} (5 - 0.6)^2 \approx 11.093 > 0
\]
as claimed.
\paragraph*{Waking up.}  We must verify 
\begin{align*}
    \Delta_{p_i}(\{a_{i-1}, p_i\}, \{b_i\}) > 0, \\
    \Delta_{q_i}(\{a_i, q_i, p_{i+1}\}, \{p_i, b_i\}) > 0.
\end{align*}
We already know that~$\cm(\{a_{i-1}, p_i\}) = 2.6$. Thus,
\[
    \Delta_{p_i}(\{a_{i-1}, p_i\}, \{b_i\}) = 2 \cdot (5 - 2.6)^2 - 0.5 \cdot (5 - 6)^2 
        = 11.02 > 0.
\]
Next,
we have~$p_{i+1} = 5p_i + t_0 = 33$. Thus,
\begin{align*}
    \cm(\{a_i, q_i, p_{i+1}\}) &= \frac{1}{3}(9 + 13 + 33) = 55/3, \\
    \cm(\{p_i, b_i\}) &= \frac{1}{2} (5 + 6) = 5.5,
\end{align*}
enabling us to verify that
\[
    \Delta_{q_i}(\{a_i, q_i, p_{i+1}\}, \{p_i, b_i\}) = 
        \frac{3}{2} \cdot (13 - 55/3)^2  - \frac{2}{3} \cdot (13 - 5.5)^2 \approx 5.167 > 0.
\]
\paragraph*{Leaf gadget.} We must verify
\begin{align*}
    \Delta_{p_i}(\{p_i, q_i, b_i\}, \{f\}) > 0,  \\
    \Delta_{p_i}(\{f, p_i\} , \{a_i\}) > 0, \\
    \Delta_{p_i}(\{a_i, p_i, q_i\}, \{f\}) > 0, \\
    \Delta_{p_i}(\{f, p_i\}, \{b_i\}) > 0. \label{ineq:last_wc}
\end{align*}
We first compute
\begin{align*}
    \cm(\{p_i, q_i, b_i\}) &= \frac{1}{3}(5+13+6) = 8, \\
    \cm(\{f, p_i\}) &= \frac{1}{2} (0 + 5) = 2.5.
\end{align*}
Finally,
\begin{align*}
    \Delta_{p_i}(\{p_i, q_i, b_i\}, \{f\})
        &= \frac{3}{2}(5 - 8)^2 - 0.5\cdot (5 - 0)^2 = 1 > 0, \\
    \Delta_{p_i}(\{f, p_i\}, \{a_i\}) &= 2 \cdot (5 - 2.5)^2 - 0.5 \cdot (5 - 9)^2
        = 4.5 > 0, \\
    \Delta_{p_i}(\{a_i, p_i, q_i\}, \{f\}) &= \frac{3}{2} \cdot (5 - 9)^2
        - 0.5 \cdot (5 - 0)^2 = 11.5 > 0, \\
    \Delta_{p_i}(\{f, p_i\}, \{b_i\}) &= 2 \cdot (5 - 2.5)^2 - 0.5 \cdot (5 - 6)^2
        = 12 > 0.
\end{align*}

This
concludes the verification of \Crefrange{ineq:morning1}{ineq:last_wc} and
therefore the proof of \Cref{thm:lower_bound}.
\end{proof}

\fi

\end{document}